\keywords{dynamic complexity, treewidth, monadic second order logic}
\newenvironment{theorem}{\begin{thm}}{\end{thm}}
\newenvironment{lemma}{\begin{lem}}{\end{lem}}
\newenvironment{corollary}{\begin{cor}}{\end{cor}}
\newenvironment{proposition}{\begin{prop}}{\end{prop}}
\newenvironment{definition}{\begin{defi}}{\end{defi}}
\newcommand  {\myclass} [1]  {\ensuremath{\textsc{#1}}}
\newcommand{\StaClass}[1]{\myclass{#1}\xspace}
\newcommand{\DynClass}[1]{\myclass{Dyn#1}\xspace}
\newcommand  {\myproblem} [1] {\textsc{#1}}
\newcommand  {\problem}[1] {\myproblem{#1}}
\newcommand     {\LOGSPACE}     {\myclass{LOGSPACE}}
\newcommand     {\NL}   {\myclass{NL}}
\newcommand     {\AC}   {\myclass{AC}}
\newcommand{\FO}{\StaClass{FO}}
\newcommand{\FOar}{\StaClass{FO$(+,\!\times\!)$}}
\newcommand{\MSO}{\StaClass{MSO}}
\newcommand{\GSO}{\StaClass{GSO}}
\newcommand{\CQ}[1][]{\StaClass{CQ}}
\newcommand{\UCQ}[1][]{\StaClass{UCQ}}
\newcommand{\CQneg}[1][]{\StaClass{CQ\ensuremath{^{\mneg}}}}
\newcommand{\UCQneg}[1][]{\StaClass{UCQ\ensuremath{^{\mneg}}}}
\newcommand{\Ind}[1]{\ensuremath{\text{IND}[#1]}}
\newcommand{\mneg}{\neg} %
\newcommand{\DynFO}{\DynClass{FO}}
\newcommand{\DynFOar}{\DynClass{FO$(+,\times)$}}
\newcommand{\mtext}[1]{\textsc{#1}}
\providecommand {\calA}      {{\mathcal A}\xspace}
\providecommand {\calB}      {{\mathcal B}\xspace}
\providecommand {\calC}      {{\mathcal C}\xspace}
\providecommand {\calD}      {{\mathcal D}\xspace}
\providecommand {\calI}      {{\mathcal I}\xspace}
\providecommand {\calO}      {{\mathcal O}\xspace}
\providecommand {\calP}      {{\mathcal P}\xspace}
\providecommand {\calS}      {{\mathcal S}\xspace}
\providecommand {\calY}      {{\mathcal Y}\xspace}
\newcommand{\N}{\ensuremath{\mathbb{N}}}
\newcommand{\R}{\ensuremath{\mathbb{R}}}
\newcommand{\bigO}{\ensuremath{\mathcal{O}}}
\newcommand{\tpl}{\bar}
\newcommand{\restrict}[2]{#1\mspace{-3mu}\upharpoonright \mspace{-3mu}#2}
\newcommand{\df}{\ensuremath{\mathrel{\smash{\stackrel{\scriptscriptstyle{
    \text{def}}}{=}}}} \;}
\newcommand{\auxramsey}[4]{
  \@ifmtarg{#1}{
    \@ifmtarg{#4}{
      \ensuremath{R(#2; #3)}
    }{
      \ensuremath{R^#4(#2; #3)}
    }
   }{
    \@ifmtarg{#4}{
      \ensuremath{R_{#1}(#2; #3)}
    }{
      \ensuremath{R^#4_{#1}(#2; #3)}
    }
  }
}
    \newenvironment{proofsketch}{\begin{proof}[Proof sketch.]}{\end{proof}}
    \newenvironment{proofof}[1]{\begin{proof}[Proof (of #1).]}{\end{proof}}
\newcommand{\arity}{\ensuremath{\text{Ar}}}
\newcommand{\struc}{\calS}
\newcommand{\inp}{\ensuremath{\calI}\xspace}
\newcommand{\aux}{\ensuremath{\calA}\xspace}
\newcommand{\query}{\ensuremath{q}}
\newcommand{\adom}{\ensuremath{\text{adom}}}
\newcommand{\BIT}{\ensuremath{\text{BIT}}}
\newcommand{\state}{\ensuremath{\struc}\xspace}
\newcommand{\prog}{\ensuremath{\calP}\xspace}
\newcommand{\uf}[4]{
  \@ifmtarg{#4}{
    \ensuremath{\phi^{#1}_{#2}(#3)}
   }{
    \ensuremath{\phi^{#1}_{#2}(#3; #4)}
  }
}
\newcommand{\huf}[4]{
  \@ifmtarg{#4}{
    \ensuremath{\widehat{\phi}^{#1}_{#2}(#3)}
   }{
    \ensuremath{\widehat{\phi}^{#1}_{#2}(#3; #4)}
  }
}
\newcommand{\ufb}[4]{
  \@ifmtarg{#4}{
    \ensuremath{\psi^{#1}_{#2}(#3)}
   }{
    \ensuremath{\psi^{#1}_{#2}(#3; #4)}
  }
}
  \newcommand{\ufsubstitute}[5]{
    \@ifmtarg{#5}{
      \ensuremath{\phi^{#2}_{#3}[#1](#4)}
    }{
      \ensuremath{\phi^{#2}_{#3}[#1](#4; #5)}
    }
  }
\newcommand{\ut}[4]{
  \@ifmtarg{#4}{
    \ensuremath{t^{#1}_{#2}(#3)}
   }{
    \ensuremath{t^{#1}_{#2}(#3; #4)}
  }
}
\newcommand{\ite}[3]{
  \@ifmtarg{#1}{
    \ensuremath{\mtext{ITE}}
   }{
    \mtext{ITE}(#1,#2,#3)  
  }
}
\newcommand{\mf}[3]{
  \@ifmtarg{#3}{
    \ensuremath{\mu_{#1}(#2)}
   }{
    \ensuremath{\mu_{#1}(#2; #3)}
  }
}
\newcommand{\mfos}[4]{
  \@ifmtarg{#4}{
    \ensuremath{{#1}_{#2}(#3)}
   }{
    \ensuremath{{#1}_{#2}(#3; #4)}
  }
}
\newcommand{\Qreach}{\ensuremath{\query_{\text{Reach}}}}
 \renewcommand{\restrict}[2]{#1[#2]}
\tikzstyle{mnode}=[
\tikzstyle{mnodeinvisible}=[
\tikzstyle{invisible}=[
\tikzstyle{invisiblel}=[
\tikzstyle{invisibleEdge}=[
\tikzstyle{nameNode}=[
\tikzstyle{namingNode}=[
\tikzstyle{mEdge}=[
\tikzstyle{dDashedEdge}=[
\tikzstyle{dEdge}=[
\tikzstyle{dhEdge}=[
\tikzstyle{uEdge}=[
\tikzstyle{uhEdge}=[
\tikzstyle{cEdge}=[
\tikzstyle{dotsEdge}=[
\tikzstyle{class rectangle}=[
\tikzstyle{mline}=[
\tikzstyle{mainclass rectangle}=[
\newcommand{\mnodedrawcolor}{black!80}
\newcommand{\mnodefillcolor}{black!40}
\tikzstyle{background rectangle}=[
\begin{document}

\title[A Strategy for Dynamic Programs: Start over and Muddle through]{A Strategy for Dynamic Programs: Start over and Muddle through}
\titlecomment{This article is the full version of  \cite{DattaMSVZ17}. The authors acknowledge the financial support by the DAAD-DST grant ``Exploration of New Frontiers in Dynamic Complexity''. The first and the second authors were partially funded by a grant from Infosys foundation. The second author was partially supported by a TCS PhD fellowship. The last three authors acknowledge the financial support by DFG grant SCHW 678/6-2 on ``Dynamic Expressiveness of Logics''.}

\author[S.~Datta]{Samir Datta\rsuper{a}}	%
\address{\lsuper{a}Chennai Mathematical Institute \& UMI ReLaX, Chennai, India}	%
\email{sdatta@cmi.ac.in}  %

\author[A.~Mukherjee]{Anish Mukherjee\rsuper{b}}	%
\address{\lsuper{b}Chennai Mathematical Institute, Chennai, India}	%
\email{anish@cmi.ac.in}  %

\author[T.~Schwentick]{Thomas Schwentick\rsuper{c}}	%
\address{\lsuper{c}TU Dortmund University, Dortmund, Germany}	%
\email{\{thomas.schwentick,nils.vortmeier,thomas.zeume\}@tu-dortmund.de}  %

\author[N.~Vortmeier]{Nils Vortmeier\rsuper{c}}	%

\author[T.~Zeume]{Thomas Zeume\rsuper{c}}	%

  \begin{abstract}
 In the setting of \DynFO, dynamic programs update the stored result of a query whenever the underlying data changes.
 This update is expressed in terms of first-order logic.
 We introduce a strategy for constructing dynamic programs that utilises periodic computation of auxiliary data from scratch and the ability to maintain a query for a limited number of change steps.
  We show that if some program can maintain a query for $\log n$
change steps after an $\AC^1$-computable initialisation, it can be
maintained by a first-order dynamic program as well, i.e., in \DynFO. 
As an application, it is shown that decision and optimisation problems
defined by monadic second-order (\MSO) 
formulas are in \DynFO, if only change sequences that produce graphs of
bounded treewidth are allowed. To establish this result, 
a Feferman-Vaught-type composition theorem for \MSO is established that might be
useful in its own right.  
  \end{abstract}
  
  \maketitle
  
  \section{Introduction}\label{section:introduction}

Each time a database is changed, any previously computed and stored result of a fixed query might become outdated. However, when the change is small, it is plausible that the new query result is highly related to the old result. 
In that case it might be more efficient to use previously computed information for obtaining the new answer to the query instead of recomputing the query result from scratch.
A theoretical framework for studying when the result of a query over relational databases can be updated in a declarative fashion was formalised by Patnaik and Immerman \cite{PatnaikI97}, and Dong, Su, and Topor \cite{DongST95}. In their formalisation, a dynamic program consists of a set of logical formulas that update a query result after the insertion or deletion of a tuple. The formulas may use additional auxiliary relations that, of course, need to be updated as well. The queries maintainable in this way via first-order formulas constitute the dynamic complexity class $\DynFO$.

Recent work has confirmed that \DynFO is a quite powerful class, since
it captures, e.g., the reachability query for directed graphs
\cite{DattaKMSZ18}, and even allows for more complex change operations than single-tuple changes \cite{SchwentickVZ18, DattaMVZ18}. 

In this article we introduce a general strategy for dynamic programs
that further underscores the expressive power of \DynFO.
All prior results for \DynFO yield dynamic programs that are able to maintain a query for arbitrary long sequences of changes.
Even if this is not (known to be) possible for a given query, in a practical scenario it might still be favourable to maintain the query result dynamically for a bounded number of changes, then to apply a more complex algorithm that recomputes certain auxiliary information from scratch, such that afterwards the query can again be maintained for some time, and so on.

Here we formalise this approach.
Let $\calC$ be a complexity class and $f \colon \N \rightarrow \N$ a function. 
A query $\query$ is called $(\calC,f)$-maintainable, if there is a dynamic program (with first-order definable updates) that, starting from some input structure $\calA$ and auxiliary relations computed in $\calC$ from $\calA$, can answer $\query$ for $f(|\calA|)$ many steps, where $|\calA|$ denotes the size of the universe of $\calA$. 

We feel that this notion might be interesting in its own
right. However, in this article we concentrate on the case where $\calC$ is (uniform) $\AC^1$ and
$f(n)=\log n$. The class $\AC^1$ contains all queries that can be computed by a (uniform) circuit of depth $\calO(\log n)$ that uses polynomially many $\wedge$-, $\vee$-, and $\neg$-gates, where $\wedge$- and $\vee$-gates may have unbounded fan-in.
We show that $(\AC^1,\log n)$-maintainable queries are
actually in \DynFO, and thus can be maintained for arbitrary long change sequences. 

We apply this insight to show that all queries and optimisation problems definable in monadic second-order
logic (\MSO) are in \DynFO for (classes of) structures of bounded
treewidth, by proving that they are $(\AC^1,\log n)$-maintainable. The same can be said about guarded second-order logic (\GSO), simply because it is expressively equivalent to \MSO on such classes \cite{Courcelle94}.
This
implies that decision problems like \problem{3-Colourability} or
\problem{HamiltonCycle} as well as optimisation problems like
\problem{VertexCover} and \problem{DominatingSet} are in \DynFO, for
such classes of structures.
This result is therefore a dynamic version of Courcelle's Theorem which states that all problems definable in (certain extensions of) \MSO can statically be solved in linear time for graphs of bounded treewidth \cite{Courcelle90}.

The proof that \MSO-definable queries are $(\AC^1,\log n)$-maintainable
on structures of bounded treewidth
makes use of a Feferman–Vaught-type composition theorem for \MSO which might be
useful in its own right.  
 
The result that $(\AC^1,\log n)$-maintainable queries are
in \DynFO comes with a technical restriction: in a nutshell, it holds
for queries that are invariant under insertion of (many) isolated
elements. We call such queries \emph{almost domain independent} and
refer to Section~\ref{section:framework} for a precise definition.

We emphasise that the main technical challenge in maintaining
\MSO-queries on graphs of bounded treewidth is that tree
decompositions might change drastically after an edge
insertion, and can therefore not be maintained incrementally in any
obvious way. In particular, the result does not simply follow from the
\DynFO-maintainability of regular tree languages shown in \cite{GeladeMS12}.
We circumvent this problem by periodically recomputing a new
tree decomposition (this can be done in logarithmic space \cite{ElberfeldJT10} and thus in $\AC^1$) and by
showing that \MSO-queries can be maintained for $\bigO(\log n)$ many
change operations, even if they make the tree decomposition invalid.

\subsection*{Contributions}
We briefly summarise the contributions described above.
In this article, we introduce the notion of $(\calC, f)$-maintainability and show that, amongst others, (almost domain independent) $(\AC^1,\log n)$-maintainable queries are in \DynFO.
We show that \MSO-definable decision problems and optimisation problems are $(\AC^1,\log
  n)$-maintainable and therefore in \DynFO, for structures of bounded
treewidth.
These proofs make use of a Feferman–Vaught-type composition theorem for \MSO logic.

\subsection*{Related work}

The simulation-based technique for proving that $(\AC^1,\log n)$-maintainable queries are in \DynFO is inspired by proof techniques from \cite{DattaKMSZ15} and  \cite{SchwentickVZ18}. As mentioned above, in \cite{GeladeMS12} it has been shown that tree languages, i.e.~MSO on trees, can be maintained in \DynFO. Independently, the maintenance of \MSO definable queries on graphs of bounded treewidth is also studied in \cite{BouyerJM17}, though in the restricted setting where the tree decomposition stays the same for all changes.
A static but parallel version of Courcelle's Theorem is given in \cite{ElberfeldJT10}: every \MSO-definable problem for graphs of bounded treewidth can be solved with logarithmic space.

\subsection*{Organisation}
Basic terminology is recalled in Section \ref{section:preliminaries},
followed by a short introduction into dynamic complexity in
Section \ref{section:framework}. In Section \ref{section:technique} we
introduce the notion of  $(\calC, f)$-maintainability and show that
$(\AC^1,\log n)$-maintainable queries are in $\DynFO$. A glimpse on
the proof techniques for proving that \MSO  queries are in
\DynFO for graphs of bounded treewidth is given in Section
\ref{section:3col} via the example \problem{3-Colourability}. The proof of the
general results is presented in Section \ref{section:mso}. An
extension to optimisation problems can be found in Section
\ref{section:mso-opt}. 

This article is the full version of \cite{DattaMSVZ17}.

  \section{Preliminaries}\label{section:preliminaries}
We now introduce some notation and notions regarding logics, graph theory and complexity theory.
We assume familiarity with first-order logic \FO and other notions from finite model theory~\cite{ImmermanDC,Libkin04}. %

\subsection*{Relational structures}
In this article we consider finite relational structures over relational signatures $\Sigma = \{R_1, \ldots, R_\ell,c_1,\ldots,c_m\}$, where each $R_i$ is a relation symbol with a corresponding arity $\arity(R_i)$, and each $c_j$ is a constant symbol.
A $\Sigma$-structure $\calA$ consists of a finite \emph{domain} $A$, also called the \emph{universe} of $\calA$, a relation $R_i^\calA \subseteq A^{\arity(R_i)}$, and a constant $c_j^\calA \in A$, for each $i \in \{1, \ldots, \ell\}, j \in \{1, \ldots, m\}$. 
The \emph{active domain} $\adom(\calA)$  of a structure $\calA$ contains all elements used in some tuple or as some constant of $\calA$. 
For a set $B \subseteq A$ that contains all constants and a relation $R$, the restriction $\restrict{R}{B}$ of $R$ to $B$ is the relation $R \cap B^{\arity(R)}$. The structure $\restrict{\calA}{B}$ induced by $B$ is the structure obtained from $\calA$ by restricting the domain and all relations to $B$. 

Sometimes, especially in Section~\ref{section:framework}, we consider relational structures as relational databases. This terminology is common in the context of dynamic complexity due to its original motivation from relational databases.  Also for this reason, dynamic complexity classes  defined later will be defined as classes of queries of arbitrary arity, and not as a class of decision problems. However, we will mostly consider queries over structures with a single binary relation symbol $E$, that is, queries on graphs.

We will often use structures $\calA$ with a linear order $\le$ on the universe $A$, and compatible ternary relations encoding arithmetical operations  $+$ and $\times$ or, alternatively, a binary relation encoding the relation $\BIT = \{(i,j) \mid $ the $j$-th bit in the binary representation of $i$ is $1\}$. The linear order in particular allows us to identify $A$ with the first $|A|$ natural numbers. We write $\FO(+,\times)$ or $\FO(\BIT)$ to emphasise that we allow  first-order formulas to use such additional relations.\footnote{The question of $<$-invariance (c.f.~\cite{Libkin04}) will not be relevant in the context of this article as a specific relation $\leq$ will be available in the structure.} We also use that $\FO(+,\times)=\FO(\BIT)$ \cite{ImmermanDC}.

\subsection*{Tree decompositions and treewidth}
A \emph{tree decomposition} $(T,B)$ of $G$ consists of a (rooted, directed) tree $T = (I,F,r)$, with (tree) nodes $I$, (tree) edges $F$, a distinguished root node $r \in I$, and a function $B \colon I \rightarrow 2^V$ such that
\begin{enumerate}
 \item[(1)] the set $\{i \in I \mid v \in B(i)\}$ is non-empty for each node $v \in V$,
 \item[(2)] there is an $i \in I$ with $\{u,v\} \subseteq B(i)$ for each edge $(u, v) \in E$, and
 \item[(3)] the subgraph $T[\{i \in I \mid v \in B(i)\}]$ is connected for each node $v \in V$.
\end{enumerate}

We refer to the number of children of a node $i$ of $T$ as its \emph{degree}, and to the set $B(i)$ as its \emph{bag}. We denote the parent node of $i$ by $p(i)$.
The \emph{width} of a tree decomposition is defined as the maximal size of a bag minus $1$. The \emph{treewidth} of a graph $G$ is the minimal width among all tree decompositions of $G$.
A tree decomposition is \emph{$d$-nice}, for some $d \in \N$, if 
\begin{enumerate}
 \item[(1)] $T$ has depth at most $d \log n$,
 \item[(2)] the degree of the nodes is at most $2$, and
 \item[(3)] all bags are distinct. 
\end{enumerate} 
Often we do not make the constant $d$ explicit and just speak of \emph{nice} tree decompositions.

Later we will use that tree decompositions can be transformed into nice tree decompositions with slightly increased width. This is formalized in the following lemma, whose proof is an adaption of \cite[Lemma 3.1]{ElberfeldJT10}.
\begin{lemma} \label{prelims:decompositions:nice}
For every $k \in \N$ there is a constant $d \in \N$ such that for every graph of treewidth $k$, a $d$-nice tree decomposition of width $4k+5$ can be computed in logarithmic space.
\end{lemma}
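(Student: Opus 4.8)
The plan is to obtain the decomposition by post-processing the output of the logarithmic-space tree-decomposition algorithm of Elberfeld, Jakoby and Tantau~\cite{ElberfeldJT10}. Their construction already produces, in logarithmic space, a balanced tree decomposition of a treewidth-$k$ graph whose depth is $\bigO(\log n)$ and whose width is linear in $k$; this takes care of the first niceness condition (logarithmic depth), and, since their balanced decomposition can be taken to have bounded degree, it essentially also takes care of the degree bound. What remains is to enforce degree at most $2$ and, more delicately, pairwise distinct bags, while keeping the width linear in $k$ (the slack between the width provided by~\cite{ElberfeldJT10} and the target $4k+5$ is precisely the budget we spend on these two repairs) and keeping the whole procedure in logarithmic space.

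For the degree bound I would binarize: a node of degree $m$ is replaced by a balanced binary tree of $\bigO(m)$ auxiliary nodes, each carrying a copy of the original bag, with the $m$ former children attached at the leaves of this gadget. Copying bags leaves the width untouched, and since the starting decomposition already has bounded degree, each binarization adds only a constant to the depth, so the $\bigO(\log n)$ depth bound survives. It is important to binarize the bounded-degree decomposition of~\cite{ElberfeldJT10} directly rather than an arbitrary logarithmic-depth tree, because binarizing a tree with high-degree nodes stacked along a single root-to-leaf path can inflate the depth to $\bigO(\log^2 n)$.

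The main obstacle is the distinct-bags condition, because the natural repairs pull against the degree and depth bounds: binarization (just used) creates many identical bags, whereas the obvious way to remove duplicates raises the degree again. The structural fact I would exploit is the connectivity axiom of tree decompositions (property~(3) of the definition): if two nodes carry the same bag $S$, then every node on the tree path between them contains $S$ as a subset; hence in a \emph{reduced} decomposition, in which no bag is contained in the bag of an adjacent node, all bags are automatically distinct (a neighbour of $i$ on the path to a second copy would witness $B(i)\subseteq B(i')$, which is forbidden). A reduced decomposition is obtained by repeatedly contracting a node into an adjacent node whenever its bag is a subset of the latter's; such contractions never increase the width and never increase the depth. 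The subtlety is that a contraction lets a node inherit the children of its contracted neighbour, and thus can raise its degree, so distinctness and the degree bound must be achieved together rather than in separate passes. I would therefore fold the binarization and the distinctness repair into a single construction: when splitting the children of a node, each auxiliary internal node receives the separator together with two \emph{boundary representatives} (for instance the extreme-index vertices governed by its subtree), which makes the auxiliary bags mutually distinct and distinct from the genuine bags at the cost of only a constant number of extra vertices — exactly the increase that pushes the width from the optimal $k$ up to $4k+5$. One then checks that these boundary vertices still satisfy the three tree-decomposition axioms, in particular connectivity.

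Finally, I would verify that the entire pipeline stays in logarithmic space. Each transformation acts on the polynomially many nodes of a decomposition that is itself logarithmic-space-computable, nodes can be addressed by logarithmic-space-computable (indeed $\FO(+,\times)$-definable) indices, and a composition of logarithmic-space-computable maps is again logarithmic-space-computable. The resulting depth bound, which depends only on $k$, fixes the constant $d$. The point where I expect the real work to lie is the combined binarization/distinctness step: simultaneously maintaining distinct bags, degree at most $2$, and logarithmic depth is exactly the tension that forces the width up from $k$ to $4k+5$.
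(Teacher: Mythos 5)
Your overall shape matches the paper's: both post-process the logspace decomposition of Elberfeld, Jakoby and Tantau and buy distinctness of bags by adding a constant number of carefully chosen extra vertices to each bag. Two calibration remarks first. The paper invokes \cite[Lemma 3.1]{ElberfeldJT10} as already delivering width $4k+3$, degree at most $2$ and depth $\bigO(\log n)$, so no binarization pass is needed at all; and the width budget spent on distinctness is exactly $+2$ (from $4k+3$ to $4k+5$), not the gap ``from the optimal $k$ up to $4k+5$'' --- the $4k+3$ is the price of the logspace construction itself.

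The genuine gap is in the distinctness step, which you yourself flag as the crux. Taking the ``boundary representatives'' of a node to be the extreme-index vertices occurring in its subtree does not separate a node from its parent or child: nested subtrees routinely have the same minimum and maximum vertex (e.g.\ along any chain, or whenever both extremes of a gadget node's range lie in the range of one of its children), so two adjacent nodes carrying the same separator receive the same two representatives and end up with identical bags. What makes the paper's version work is a preliminary pruning step guaranteeing that every leaf bag contains a vertex $u(i)$ \emph{private} to that leaf (appearing in no other bag); the vertices then added to an inner bag are the private vertices of the leftmost and rightmost leaves of its subtree, and privateness is exactly what certifies that a degree-2 node differs from each of its children. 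A second pruning step (deleting degree-1 inner nodes whose bag is contained in a neighbour's) handles the remaining case of degree-1 chains, where parent and child necessarily receive the same two added vertices. Your proposal has no substitute for either step: duplicate \emph{genuine} bags are only addressed by your contraction-to-reduced-form idea, and that pass can create high-degree nodes whose subsequent binarization threatens the $\bigO(\log n)$ depth bound by your own earlier observation --- a tension you name but do not resolve. The paper sidesteps it entirely by only ever \emph{deleting} nodes (leaves, and degree-1 inner nodes), which can never increase degree or depth.
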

\begin{proof}
Let $G$ be a graph of treewidth $k$. By \cite[Lemma 3.1]{ElberfeldJT10} a tree decomposition $(T,B)$ of width $4k+3$ can be computed in logarithmic space, such that each non-leaf node has degree 2 and the depth is at most $d \log n$, for a constant $d$ that only depends on $k$. 
To obtain a tree decomposition with distinct bags, we compose this algorithm with three further algorithms, each reading a tree decomposition $(T,B)$ and transforming it into a tree decomposition $(T',B')$ with a particular property. Since each of the four algorithms requires only logarithmic space, the same holds for their composition. 

The first transformation algorithm produces a tree decomposition, in which for each leaf bag $i$ it holds $B(i)\not\subseteq B(p(i))$. In particular, after this transformation, each bag of a leaf node $i$ contains some graph node $u(i)$ that does not appear in any other bag. This transformation inspects each node $i$ separately in a bottom-up fashion, and removes it if (1) $B(i)\subseteq B(p(i))$ and (2) every bag below $i$ is a subset of $B(i)$. Clearly, logarithmic space suffices for this.

The second transformation (inductively) removes an inner node $i$ of degree 1 with a child~$i'$ and inserts an edge between $B(p(i))$ and $B(i')$ whenever $B(i)\subseteq B(p(i))$ or $B(i)\subseteq B(i')$ holds. For this transformation, only one linear chain of nodes in $T$ has to be considered at any time and therefore logarithmic space suffices again. Clearly, the connectivity property is not affected by these deletions.

The third transformation adds to every bag of an inner node $i$ the nodes $u(i_1)$ and $u(i_2)$, guaranteed to exist by the first transformation, of the  leftmost and rightmost leaf nodes $i_1$ and $i_2$ of the subtree rooted at $i$, respectively. Here, we assume the children of every node to be ordered by the representation of $T$ as input to the algorithm. 
After this transformation, each node of degree 2 has a different bag than its two children thanks to the addition of  $u(i_1)$ and $u(i_2)$. Each node of degree 1 has a different bag than its child, since this was already the case before (and to both of them the same two nodes might have been added). Altogether, all bags are pairwise distinct and the bag sizes have increased by at most 2.

We emphasise that, whenever a leftmost graph node  $u(i_1)$ is added to $B(i)$, it is also added to all bags of nodes on the path from $i$ to $i_1$ and therefore the connectivity property is not corrupted. It is easy to see that the third transformation can also be carried out in logarithmic space. 

The three presented algorithms never increase the depth of a tree decomposition, so the final result is a $d$-nice tree decomposition for $G$ of width $4k+5$.
\end{proof}
In this paper we only consider nice tree decompositions, and due to property (3) of these decompositions we can identify bags with nodes from $I$.

For two nodes $i,i'$ of $I$, we write $i\preceq i'$ if $i'$ is in the subtree of $T$ rooted at $i$ and  $i \prec i'$ if, in addition, $i'\not=i$. 
 A \emph{triangle} $\delta$ of $T$ is a triple $(i_0,i_1,i_2)$ of nodes from $I$ such that $i_0\preceq i_1$,  $i_0\preceq i_2$, and (1) $i_1=i_2$ or (2) neither $i_1\preceq i_2$ nor $i_2\preceq i_1$.
In case of (2) we call the triangle \emph{proper}, in case of (1) \emph{unary}, unless $i_0=i_1=i_2$ in which we call it \emph{open}  (see Figure \ref{figure:triangle} for an illustration).
  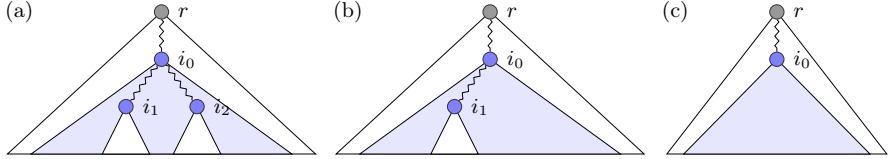
\begin{figure}[t] 
    \begin{center}
    \scalebox{0.9}{
          \begin{tikzpicture}[
            xscale=0.7,
            yscale=0.7,
            font=\footnotesize,
          ]
                \node (tmp) at (-3, 3) {(a)};
                \node (r) at (0,3)[mnode, label=right:$r$] {};
                \node[mnode, fill=blue!50] (i0) at (0,2)[label=right:$i_0$] {};
                \node[mnode, fill=blue!50] (i1) at (-0.75,1)[label=right:$i_1$] {};
                \node[mnode, fill=blue!50] (i2) at (0.75,1)[label=right:$i_2$] {};

                \node (a) at (-3.25,0)[invisible] {};
                \node (b) at (-2.75,0)[invisible] {};
                \node (c) at (-1.25,0)[invisible] {};
                \node (d) at (-0.25,0)[invisible] {};
                \node (e) at (0.25,0)[invisible] {};
                \node (f) at (1.25,0)[invisible] {};
                \node (g) at (2.75,0)[invisible] {};
                \node (h) at (3.25,0)[invisible] {};
      
                \draw [decoration={zigzag,segment length=4,amplitude=.9},line join=round] (r) edge[decorate]  (i0);
                \draw [decoration={zigzag,segment length=4,amplitude=.9},line join=round] (i0) edge[decorate]  (i1);
                \draw [decoration={zigzag,segment length=4,amplitude=.9},line join=round] (i0) edge[decorate]  (i2);

                \begin{pgfonlayer}{background}
                \draw[] (a.center) -- (h.center) -- (r.center) -- cycle;
                \draw[fill=blue!10] (b.center) -- (g.center) -- (i0.center) -- cycle;
                \draw[fill=white] (c.center) -- (d.center) -- (i1.center) -- cycle;
                \draw[fill=white] (e.center) -- (f.center) -- (i2.center) -- cycle;
               \end{pgfonlayer}
      \end{tikzpicture}
          \begin{tikzpicture}[
            xscale=0.7,
            yscale=0.7,
            font=\footnotesize,
          ]
                \node (tmp) at (-3, 3) {(b)};
                \node (r) at (0,3)[mnode, label=right:$r$] {};
                \node[mnode, fill=blue!50] (i0) at (0,2)[label=right:$i_0$] {};
                \node[mnode, fill=blue!50] (i1) at (-0.75,1)[label=right:$i_1$] {};

                \node (a) at (-3.25,0)[invisible] {};
                \node (b) at (-2.75,0)[invisible] {};
                \node (c) at (-1.25,0)[invisible] {};
                \node (d) at (-0.25,0)[invisible] {};
                \node (e) at (0.25,0)[invisible] {};
                \node (f) at (1.25,0)[invisible] {};
                \node (g) at (2.75,0)[invisible] {};
                \node (h) at (3.25,0)[invisible] {};
      
                \draw [decoration={zigzag,segment length=4,amplitude=.9},line join=round] (r) edge[decorate]  (i0);
                \draw [decoration={zigzag,segment length=4,amplitude=.9},line join=round] (i0) edge[decorate]  (i1);

                \begin{pgfonlayer}{background}
                \draw[] (a.center) -- (h.center) -- (r.center) -- cycle;
                \draw[fill=blue!10] (b.center) -- (g.center) -- (i0.center) -- cycle;
                \draw[fill=white] (c.center) -- (d.center) -- (i1.center) -- cycle;
               \end{pgfonlayer}
      \end{tikzpicture}
          \begin{tikzpicture}[
            xscale=0.5,
            yscale=0.7,
            font=\footnotesize,
          ]
                \node (tmp) at (-3, 3) {(c)};
                \node (r) at (0,3)[mnode, label=right:$r$] {};
                \node[mnode, fill=blue!50] (i0) at (0,2)[label=right:$i_0$] {};

                \node (a) at (-3.25,0)[invisible] {};
                \node (b) at (-2.75,0)[invisible] {};
                \node (c) at (-1.25,0)[invisible] {};
                \node (d) at (-0.25,0)[invisible] {};
                \node (e) at (0.25,0)[invisible] {};
                \node (f) at (1.25,0)[invisible] {};
                \node (g) at (2.75,0)[invisible] {};
                \node (h) at (3.25,0)[invisible] {};
      
                \draw [decoration={zigzag,segment length=4,amplitude=.9},line join=round] (r) edge[decorate]  (i0);

                \begin{pgfonlayer}{background}
                \draw[] (a.center) -- (h.center) -- (r.center) -- cycle;
                \draw[fill=blue!10] (b.center) -- (g.center) -- (i0.center) -- cycle;
               \end{pgfonlayer}
      \end{tikzpicture}

    }
    \caption{Illustration of (a) a proper triangle $(i_0, i_1, i_2)$, (b) a unary triangle $(i_0, i_1, i_1)$, and (c) an open triangle $(i_0, i_0, i_0)$. The blue 
shaded area is the part of the tree contained in the triangle.}\label{figure:triangle}
    \end{center}%
  \end{figure}
 The subtree $T(\delta)$ \emph{induced} by a triangle consists of all nodes $j$  of $T$ for which the following holds: 
 \begin{enumerate}
\item[(i)]$i_0 \preceq j$, 
\item[(ii)] if $i_0\prec i_1$ then  $i_1\not\prec j$, and 
\item[(iii)] if $i_0\prec i_2$ then  $i_2\not\prec j$.
\end{enumerate}
That is, for a proper or unary triangle, $T(\delta)$ contains all nodes of the subtree rooted at $i_0$ which are not below $i_1$ or $i_2$. For an open triangle $\delta=(i_0,i_0,i_0)$,  $T(\delta)$ is just the subtree rooted at~$i_0$. 

Each triangle $\delta$ induces a subgraph $G(\delta)$ of $G$  as follows: $V(\delta)$ is the union of all  bags of~$T(\delta)$. By $B(\delta)$ we denote the set $B(i_0)\cup B(i_1)\cup B(i_2)$ of \emph{interface nodes} of $V(\delta)$. All other nodes in $V(\delta)$ are called \emph{inner nodes}.
The edge set of $G(\delta)$ consists of all edges of $G$ that involve at least one inner node of $V(\delta)$.

\subsection*{MSO-logic and MSO-types}

\MSO is the extension of first-order logic that allows existential and universal quantification over set variables $X,X_1,\ldots$. The \emph{(quantifier) depth} of an \MSO formula is the maximum nesting depth of (second-order and first-order) quantifiers in the syntax tree of the formula.  

For a signature $\Sigma$ and a natural number $d\ge 0$, the \emph{depth-$d$ \MSO type} of a $\Sigma$-structure~$\calA$ is defined as the set of all \MSO sentences $\varphi$ over $\Sigma$ of quantifier depth at most $d$, for which $\calA\models\varphi$ holds. 
We also define the notion of types for structures with additional constants and \MSO formulas with free variables. Let  $\calA$ be a $\Sigma$-structure and $\tpl v=(v_1,\ldots,v_m)$ a tuple of elements from $\calA$. We write $(\calA,\tpl v)$ for the structure over $\Sigma\cup\{c_1,\ldots,c_m\}$ which interprets $c_i$ as $v_i$, for every $i\in\{1,\ldots,m\}$. For a  set $\calY$ of first-order and second-order variables and an assignment $\alpha$ for the variables of $\calY$, the depth-$d$ \MSO type of  $(\calA,\tpl v, \alpha)$ is the set of \MSO formulas with free variables from $\calY$ of depth $d$ that hold in  $(\calA,\tpl v, \alpha)$. 
For every depth-$d$ \MSO type $\tau$, there is a depth-$d$ \MSO formula $\alpha_\tau$ that is true in exactly the structures and for those assignments of type $\tau$.

The logic \emph{guarded second-order logic} (\GSO) extends \MSO by \emph{guarded} second-order quantification.
Thus, it syntactically allows to quantify over non-unary relation variables. However, this quantification is semantically restricted: a tuple $\tpl t = (a_1, \ldots, a_m)$ can only occur in a quantified relation, if all elements from $\{a_1, \ldots, a_m \}$ occur together in some tuple of the structure, in which the formula is evaluated.

For more background on \MSO logic and types, readers might consult, e.g., \cite{Libkin04}.

\subsection*{Complexity classes and descriptive characterisations}
Our main result refers to the complexity class (uniform) $\AC^1$. It contains all queries that can be computed by (families of uniform) circuits of depth $\bigO(\log n)$, consisting of polynomially many ``and'', ``or'' and ``not'' gates, where ``and'' and ``or'' gates may have unbounded fan-in.
It contains the classes  $\LOGSPACE$ and $\NL$, and it can be characterised as the class $\Ind{\log n}$  of problems that can be expressed by applying a first-order formula $\bigO(\log n)$ times \cite[Theorem 5.22]{ImmermanDC}. Here, $n$ denotes the size of the universe and the formulas can use built-in relations $+$ and~$\times$. 
More generally, this characterisation is also valid for the analogously defined classes $\AC[f(n)]$ and $\Ind{f(n)}$, where the depth of the circuits and the number of applications of the first-order formula is $f(n)$, respectively, for some function $f \colon \N \to \N$. Technically, the function $f$ needs to be \emph{first-order constructible}, that is, there has to be a $\FOar$ formula $\psi_f(\tpl x)$ such that $\calA \models \psi_f(\tpl a)$ if and only if $\tpl a$ is a base-$n$ representation of $f(n)$, for any ordered structure $\calA$ with domain $\{0, \ldots, n-1\}$. 

Our proofs often assume that $\log n$ is a natural number, but they can be easily adapted to the general case.

  \section{Dynamic Complexity} \label{section:framework}
\renewcommand{\aux}{\ensuremath{Aux}}

We briefly repeat the essentials of dynamic complexity, closely following \cite{SchwentickZ16, DattaKMSZ18}. %

The goal of a dynamic program is to answer a given query on an \emph{input database} subjected to changes that insert or delete single tuples. The program may use an auxiliary data structure represented by an \emph{auxiliary database} over the same domain. Initially, both input and auxiliary database are empty; and the domain is fixed during each run of the program. 

A dynamic program has a set of update rules that specify how auxiliary relations are updated after a change of the input database. An \emph{update rule} for updating an auxiliary relation $T$ is
basically a formula $\varphi$. As an example, if $\varphi(\tpl x, \tpl y)$ is the update rule for auxiliary relation $T$ under insertions into input relation $R$, then the new version of $T$ after insertion of a tuple $\tpl a$ to $R$ is $T \df \{ \tpl b \mid (\inp, \aux) \models \varphi(\tpl a, \tpl b)\}$ where $\inp$ and $\aux$ are the current input and auxiliary databases.
For a state $\state = (\inp, \aux)$ of the dynamic program $\prog$ with input database $\inp$ and auxiliary database $\aux$ we denote the state of the program after applying a sequence $\alpha$ of changes by $\prog_\alpha(\state)$. 
The dynamic program $\prog$ \emph{maintains} a $k$-ary query $\query$ if, for each non-empty sequence $\alpha$ of changes and each empty input structure $\inp_\emptyset$, a designated auxiliary relation $Q$ in $\prog_\alpha(\state_\emptyset)$ and $\query(\alpha(\inp_\emptyset))$ coincide. Here, $\state_\emptyset=(\inp_\emptyset, \aux_\emptyset)$, where $\aux_\emptyset$ denotes the empty auxiliary structure over the domain of $\inp_\emptyset$, and $\alpha(\inp_\emptyset)$ is the input database after applying $\alpha$. 

In this article, we are particularly interested in maintaining queries for structures of bounded treewidth. There are several ways to  adjust the dynamic setting to restricted classes $\calC$ of structures. 
Sometimes it is possible that a dynamic program itself detects that a change operation would yield a structure outside the class $\calC$.
However, here we simply disallow change sequences that construct structures outside~$\calC$. That is, in the above definition, only change sequences  $\alpha$ are considered, for which each prefix transforms an  initially empty structure into a structure from $\calC$. 
We say that a program maintains $\query$ for a class $\calC$ of structures, if $Q$ contains its result after each change sequence $\alpha$ such that the application of each prefix of $\alpha$ to $\inp_\emptyset$ yields a structure from~$\calC$. 
  
The class of queries that can be maintained by a dynamic program with first-order update formulas is called $\DynFO$. We say that a query $\query$ is in \DynFO for a class $\calC$ of structures, if there is such a dynamic program that  maintains $\query$ for $\calC$. Programs for queries in \DynFOar have three particular auxiliary relations $\leq, +, \times$ that are initialised as a linear order and the corresponding addition and multiplication relations. %

For a wide class of queries, membership in \DynFOar implies membership in \DynFO \cite{DattaKMSZ18}. 
Queries of this class have the property that the query result does not change considerably when elements are added to the domain but not to any relation.
Informally, a query $\query$ is called almost domain independent if there is a constant $c$ such that if a structure already has at least $c$ ``non-active'' elements, adding more ``non-active'' elements does not change the query result with respect to the original elements.
More formally, a query $\query$ is \emph{almost domain independent} if there is a $c \in \N$ such that, for every structure $\calA$ and every set $B \subseteq A \setminus \adom(\calA)$ with $|B| \geq c$ it holds  $\restrict{\query(\calA)}{(\adom(\calA) \cup B)} = \query(\restrict{\calA}{(\adom(\calA) \cup B)})$.

\begin{exas}\hfill
 \begin{enumerate}
  \item The binary reachability query $\Qreach$, that maps a directed input graph $G=(V,E)$ to its transitive closure relation, is almost domain independent with $c = 0$: adding any set $B \subseteq V \setminus \adom(G)$ of isolated nodes to a graph does not create or destroy paths in the remaining graph. Note that, for each node $v \in V$, the tuple $(v,v)$ is part of the query result $\Qreach(G)$, so $\Qreach(G) \neq \Qreach(\restrict{G}{\adom(G)})$ in general and therefore $\Qreach$ is not \emph{domain independent} in the sense of \cite{DattaKMSZ15}.
  \item The \FO definable Boolean query $\query_{|\neg \adom|=2}$, which is true if and only if exactly two elements are not in the active domain, is almost domain independent with $c=3$.
  \item The Boolean query $\query_{\text{even}}$, which is true for domains of even size and false otherwise, is not almost domain independent.
 \end{enumerate}
\end{exas}

Furthermore, all properties definable in monadic second-order logic are almost domain dependent.

\begin{proposition}\label{prop:msoadi}
All \MSO-definable queries are almost domain independent.
\end{proposition}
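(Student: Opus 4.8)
The plan is to exploit the fact that, at the level of \MSO types, non-active (isolated) elements are mutually indistinguishable and can only be counted up to a fixed threshold. Suppose $\query$ is a $k$-ary query defined by an \MSO formula $\varphi(\tpl x)$ of quantifier depth $d$, so that $\query(\calA)=\{\tpl a\in A^k\mid (\calA,\tpl a)\models\varphi\}$. First I would isolate the following type-theoretic statement: there is a threshold $t=t(d)$ such that for every structure $\calB$ whose universe splits as the disjoint union of an \emph{active part} $\calB_0$ and a set $I_m$ of $m$ \emph{isolated} elements (elements occurring in no tuple and naming no constant), the depth-$d$ \MSO type of $\calB$ depends on $m$ only through $\min(m,t)$. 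In words, once at least $t$ isolated elements are present, adding further isolated elements leaves the depth-$d$ \MSO type unchanged.

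This statement is established by a standard Ehrenfeucht--Fra\"iss\'e argument for the $d$-round \MSO game: the duplicator copies every move on the active part $\calB_0$ verbatim, while on the featureless pool of isolated elements it only needs to match chosen points and the sizes of chosen sets up to $2^{d-r}$ in round $r$, which succeeds whenever both pools have size at least $2^d$; hence $t(d)=2^d$ suffices. (Equivalently, this is the composition property for \MSO applied to $\calB_0\disjointunion I_m$, together with the elementary fact that the depth-$d$ \MSO type of a pure set of size $m$ stabilises for $m\ge 2^d$; see \cite{Libkin04}.)

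Next I would set $c\df t(d)+k$ and verify the definition directly. Fix a structure $\calA$ and a set $B\subseteq A\setminus\adom(\calA)$ with $|B|\ge c$, and take any tuple $\tpl a\in(\adom(\calA)\cup B)^k$. Both $\calA$ and $\restrict{\calA}{\adom(\calA)\cup B}$ decompose as the disjoint union of one and the same active part---namely the structure induced on $\adom(\calA)$ together with the at most $k$ components of $\tpl a$ lying in $B$, treated as marked points---and a pool of unmarked isolated elements. These two pools have sizes $|A|-|\adom(\calA)|$ and $|B|$ minus the number of marked isolated points, respectively, and both are at least $t(d)$, since $|B|\ge c=t(d)+k$ and $B\subseteq A\setminus\adom(\calA)$ forces $|A|-|\adom(\calA)|\ge|B|$. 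By the type-stabilisation statement applied to $(\calA,\tpl a)$ and $(\restrict{\calA}{\adom(\calA)\cup B},\tpl a)$, these two expanded structures share the same depth-$d$ \MSO type, so in particular $\calA\models\varphi(\tpl a)$ iff $\restrict{\calA}{\adom(\calA)\cup B}\models\varphi(\tpl a)$. Letting $\tpl a$ range over $(\adom(\calA)\cup B)^k$ yields exactly $\restrict{\query(\calA)}{(\adom(\calA)\cup B)}=\query(\restrict{\calA}{(\adom(\calA)\cup B)})$, with $c$ depending only on $\query$.

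The one point that needs care---and the main obstacle---is the bookkeeping for components of $\tpl a$ that happen to fall into $B$: these must be moved into the active/marked part of the decomposition so that the remaining isolated pool is genuinely featureless and the stabilisation statement applies, which is precisely why the threshold is inflated from $t(d)$ to $t(d)+k$. The type-stabilisation statement itself is routine, but it is the conceptual heart of the argument, and everything else is a direct unwinding of the definition of almost domain independence.
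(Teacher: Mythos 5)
Your proposal is correct and follows essentially the same route as the paper's proof sketch: an Ehrenfeucht--Fra\"iss\'e game for \MSO in which the duplicator plays the identity on the common (active plus marked) part and a threshold-counting strategy on the two pools of indistinguishable isolated elements, the only differences being cosmetic bookkeeping (your per-round thresholds $2^{d-r}$ summing to a constant $2^d+k$, versus the paper's $2^d$ colours each kept ``equal or both large'', giving $c=2^d d+e$). Since the proposition only asserts the existence of some constant $c$, either accounting is fine.
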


\begin{proofsketch}
 This can be easily shown by an Ehrenfeucht game. Let $\varphi$ be an \MSO-formula of quantifier depth $d$ with $e$ free (node) variables and let $c=2^dd+e$. Consider two graphs $\calA_1$ and $\calA_2$ that result from adding $c_1\ge c$ and $c_2\ge c$ isolated nodes to some graph $\calA$, respectively. Since $\varphi$ might have free variables, the game is played on $(\calA_1,\tpl a_1)$ and $(\calA_2,\tpl a_2)$, where $\tpl a_1$ and $\tpl a_2$ are tuples of elements of length $e$.
Within $d$ moves the spoiler can use at most $d$ set moves and can therefore induce at most $2^d$ different ``colours'' on $\calA_1$ and $\calA_2$. However, the duplicator can easily guarantee that, for each such ``colour'', the number of isolated nodes (which do not occur in the initial tuples on both structures) of that colour is the same in $\calA_1$ and $\calA_2$ \emph{or} it is larger than $d$, in both of them. Since the spoiler can have at most $d$ node moves, he can not make use of this difference in the two structures. 
\end{proofsketch}

\noindent The following proposition adapts Proposition 7 from \cite{DattaKMSZ18}.

\begin{proposition}\label{proposition:acisfo:almost}
Let $\query$ be an almost domain independent query. If \mbox{$\query\in \DynFOar$} then also \mbox{$\query\in \DynFO$}.
\end{proposition}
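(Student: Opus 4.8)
The plan is to convert a given \DynFOar program $\prog'$ maintaining \query into a \DynFO program \prog. The only real difference between the two classes is that \prog may not assume pre-initialised arithmetic relations $\leq, +, \times$ over the whole fixed domain $A$: in \DynFO all auxiliary relations start empty. The core idea is therefore to build arithmetic incrementally over the part of the domain that the computation actually touches, and then to use almost domain independence to argue that arithmetic over this part suffices to recover the answer over all of $A$.

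First I would maintain, as auxiliary relations of \prog, a linear order $\leq_{\adom}$ together with compatible addition and multiplication relations on a set $S \subseteq A$ that always contains the active domain $\adom(\calA)$. Whenever a change activates a previously untouched element, \prog appends it at the top of the current order and extends the arithmetic tables accordingly; since the old tables are available as auxiliary relations, these extensions are first-order definable. This is the standard mechanism for maintaining arithmetic over the active domain in \DynFO, and I would invoke it as a building block (cf.\ the construction underlying \cite{DattaKMSZ18}). Deletions need not shrink $S$: an element that leaves the active domain simply stays in the ordered set as an isolated, non-active element, which turns out to be convenient below.

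Second, I would run $\prog'$ on top of this maintained arithmetic, letting $\prog'$ treat $S$ as though it were its full domain. Since every change touches only elements that thereby become active, the input seen by $\prog'$ is exactly $\restrict{\calA}{S}$, so by correctness of $\prog'$ its designated relation equals $\query(\restrict{\calA}{S})$. To relate this to the true answer I invoke almost domain independence: arranging that $S$ contains at least $c' \df \max(c,k)$ non-active elements, where $c$ is the constant from the definition and $k$ the arity of \query, yields $\query(\restrict{\calA}{S}) = \restrict{\query(\calA)}{S}$, so the simulation computes \query correctly on all tuples over $S$.

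Finally, I would recover the full answer relation over $A$. A tuple $\tpl t \in A^k$ uses at most $k \leq c'$ distinct non-active elements, so by genericity of \query it lies in $\query(\calA)$ iff any tuple obtained by mapping its non-active entries injectively into the non-active elements of $S$ (preserving equalities) lies in the already-computed $\restrict{\query(\calA)}{S}$; this re-indexing is first-order definable from $\leq_{\adom}$ and the active-domain predicate. I expect the main obstacle to be exactly this interaction between the fixed-domain semantics of \DynFOar and the active-domain simulation: one must guarantee that $S$ always carries enough non-active ``buffer'' elements for almost domain independence to apply, and that the particular choice of buffer is immaterial, which is precisely where the constant $c$ and the genericity of \query enter. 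The incremental maintenance of $+$ and $\times$ is technically the most laborious step, but it is otherwise routine and citable.
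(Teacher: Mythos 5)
Your overall plan (simulate the \DynFOar program over the activated part of the domain, then use almost domain independence plus genericity to lift the answer to the full domain) matches the paper's in spirit, but its central step is unsound as stated. A \DynFOar program $\prog'$ is only guaranteed to be correct when it is run \emph{from the empty structure on a fixed domain} over which $\leq,+,\times$ are fully initialised. You instead continue a single run of $\prog'$ while the domain $S$ grows: when a fresh element $a$ is appended to $S$, the auxiliary relations of $\prog'$ are whatever they were on $S\setminus\{a\}$ and in particular contain no tuples mentioning $a$, whereas a legitimate run of $\prog'$ on the larger domain could have populated its auxiliary relations with tuples over $a$ from the very first update (think of a program that stores the maximum domain element, or the domain size, in an auxiliary relation and uses it essentially later). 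Since the proposition must hold for an \emph{arbitrary} witness $\prog'$, the claim ``by correctness of $\prog'$ its designated relation equals $\query(\restrict{\calA}{S})$'' does not follow: the state reached after $t$ steps on the growing domain need not coincide with the state after $t$ steps on the final domain. Almost domain independence is a property of the query, not of the program, so it cannot repair this.

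This is exactly what the thread mechanism in the paper's proof (adapted from Proposition 7 of \cite{DattaKMSZ18}) is for: thread $i$ fixes a simulation domain of size $f(i+1)$ \emph{in advance}, uses the change steps between the activation of the $f(i-1)$-th and the $f(i)$-th element as a warm-up phase in which it brings $\prog$ from scratch into the state corresponding to the current input on that fixed domain, is in charge only while the activated domain still fits, and then hands over to the next thread. The constant $c$ from almost domain independence forces a further refinement you would also need: if fewer than $c$ elements of the \emph{whole} domain are non-activated, no set $S$ with a $c$-element non-active buffer exists, so the simulation must run on a domain of exactly the true size $n$ --- hence the paper keeps one simulation $(i,\ell)$ for every $\ell\in\{f(i)-c+1,\dots,f(i+1)\}$ and switches to $(i,n)$ when necessary. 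Your proposal would need both ingredients; the incremental maintenance of arithmetic on the activated domain, which you correctly isolate and cite, is the unproblematic part.
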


The same statement is proved for weakly domain independent queries in \cite[Proposition 7]{DattaKMSZ18}.
  A query $\query$ is weakly domain independent, if $\restrict{\query(\calA)}{\adom(\calA)} = \query(\restrict{\calA}{\adom(\calA)})$ for all structures $\calA$, that is, if it is almost domain independent with $c=0$.
The proof of \cite[Proposition 7]{DattaKMSZ18} can easily be adapted for this more general statement, so we omit a full proof here. However, for readers who are familiar with the proof of \cite[Proposition 7]{DattaKMSZ18}, we sketch the necessary changes.
  
\begin{proofsketch}
We assume familiarity with the proof of \cite[Proposition 7]{DattaKMSZ18} and only repeat its main outline.
That proof explains, for a given $\DynFOar$-program $\prog$, how to construct a $\DynFO$-program $\prog'$ that is equivalent to~$\prog$. 
This program $\prog'$ relies on the observation that a linear order, addition and multiplication can be maintained on the \emph{activated} domain \cite{Etessami98}, that is, on all elements that were part of the active domain at some time during the dynamic process. The linear order is determined by the order in which the elements are activated.

The program $\prog'$ can be regarded as the parallel composition of multiple copies, called \emph{threads}, of the same dynamic program.
Each thread simulates $\prog$ for a certain period of time:
thread $i$ \emph{starts} when $f(i-1)$ elements are activated and does some initialisation, and it is \emph{in charge} of answering the query whenever more than $f(i)$ but at most $f(i+1)$ elements are activated, for some function $f$. Thanks to weak domain independence, thread $i$ only needs to simulate $\prog$ on a domain with $f(i+1)$ elements, for which it can define the arithmetic relations $\leq$, $+$ and $\times$ based on the available arithmetic relations on the $f(i-1)$ activated elements.

For almost domain independent queries we extend this technique slightly. 
Let $c$ be the constant from almost domain independence. 
The query result for a structure with domain size $n$ coincides with the result of a simulation on a domain of size $n'$, if $n = n'$ or if there are at least $c$ non-activated elements in both domains.
To ensure that the simulation can in principle represent these $c$ elements, phase $i$ is generally in charge as long as the number of activated elements is at least $f(i)-c+1$ but at most $f(i+1)-c$, as in phase $i$ one can simulate $\prog$ on a domain of size $f(i+1)$.

However, if in the original structure the number of non-activated elements becomes smaller than $c$, the simulation has to switch to a domain of the same size as the original domain.
Therefore in phase $i$ there is not only one simulation with domain size $f(i+1)$, but one simulation (denoted by the pair $(i,\ell)$) for each domain size $\ell \in \{f(i)-c+1, \ldots, f(i+1)\}$. %
During phase $i$ the simulation denoted by $(i,f(i+1))$ is in charge unless there are fewer than $c$ non-activated elements in the original domain, which can be easily detected by a first-order formula. As soon as that happens, the simulation denoted by $(i,n)$ takes over, where $n$ is the size of the domain.
\end{proofsketch}

 \section{Algorithmic Technique} \label{section:technique}
The definition of $\DynFO$ requires that for the problem at hand each change can be handled by a first-order definable update operation. 
There are alternative definitions of \DynFO, where the initial structure is non-empty and the initial auxiliary relations can be computed within some complexity class \cite{PatnaikI97,WeberS07}. However, in a practical scenario of dynamic query answering it is conceivable that the quality of the auxiliary relations decreases over time and that they are therefore recomputed from scratch at times. We formalise this notion by a relaxed definition of maintainability in which the initial structure is non-empty, the dynamic program is allowed to apply some preprocessing, and query answers need only be given for a certain number of change steps. 

A query $\query$ is called \emph{$(\calC,f)$-maintainable}, for some complexity class\footnote{Strictly speaking $\calC$ should be a complexity class of functions. In this paper, the implied class of functions will always be clear from the stated class of decision problems.} $\calC$ and some function~\mbox{$f:\N\to\R$}, if there is a dynamic program $\prog$ and a $\calC$-algorithm $\calA$ such that for each input database $\inp$ over a domain of size $n$, each linear order $\leq$ on the domain, and each change sequence $\alpha$ of length $|\alpha| \leq f(n)$, the relation $Q$ in $\prog_\alpha(\state)$ and $\query(\alpha(\inp))$ coincide where~$\state = (\inp, \calA(\inp,\leq))$.

Although we feel that $(\calC,f)$-maintainability deserves further investigation, in this paper we exclusively use it as a tool to prove that queries are actually maintainable in $\DynFO$. To this end, we show next that every $(\AC^1,\log n)$-maintainable query is actually in $\DynFO$ and prove later that the queries in which we are interested are $(\AC^1,\log n)$-maintainable.

\begin{theorem}\label{theorem:fewerChanges}
Every  $(\AC^1,\log n)$-maintainable, almost domain independent query is in $\DynFO$.
\end{theorem}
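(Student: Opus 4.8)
The plan is to reduce the statement to the arithmetic setting and then to design a single $\DynFOar$ program implementing the ``start over and muddle through'' idea suggested by the title. Since $\query$ is almost domain independent, Proposition~\ref{proposition:acisfo:almost} reduces the task to showing $\query\in\DynFOar$, so that we may use the built-in relations $\leq,+,\times$ in every update formula. This is precisely what the construction needs: the preprocessing algorithm $\calA$ is an $\AC^1$-algorithm, and by the characterisation $\AC^1=\Ind{\log n}$ \cite[Theorem 5.22]{ImmermanDC} its output on an input $\inp$ is obtained by iterating one fixed $\FOar$ formula $\psi$ exactly $L\df\log n$ times on $\inp$. Hence, with arithmetic available, the otherwise out-of-reach $\AC^1$-preprocessing can be produced by a first-order dynamic program, as long as we grant it $L$ iteration steps.

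The heart of the construction is to run many overlapping \emph{phases} of one routine in parallel, coordinated by step-counters and a bounded-length log of the most recent $\bigO(\log n)$ changes, all of which are FO-definable with the built-in arithmetic (I use the paper's convention that $\log n\in\N$). A phase born at time $\tau$ first \emph{starts over}: it copies the current input into a private relation and freezes this copy, thereby remembering the snapshot $\inp_\tau$. It then \emph{recomputes} $\calA(\inp_\tau)$ by iterating $\psi$ on the frozen copy; the key point is that FO is closed under composition, so each single update step may carry out any fixed constant number $k$ of iterations of $\psi$, and the recomputation therefore finishes within $L/k$ update steps. Finally the phase \emph{muddles through}: starting from the pair $(\inp_\tau,\calA(\inp_\tau))$ it simulates the given program $\prog$, replaying from its log the changes that occurred since $\tau$ (several at a time to clear the backlog, then one per step to track the live input) and keeping a private copy of the designated relation $Q$. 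By the definition of $(\AC^1,\log n)$-maintainability, this copy coincides with $\query$ of the current input for as long as the number of changes fed to $\prog$, namely the current time minus $\tau$, stays below $L$.

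For correctness it then suffices to arrange the timing so that recomputation plus catch-up occupy only a small constant fraction of the $L$-change budget and that the phases cover all times. The recomputation costs at most $L/k$ steps, and clearing the backlog accumulated during it costs only a comparably small number of steps when $\prog$ is simulated a few changes at a time; consequently a phase becomes active well before exhausting its budget and then stays correct for a window of $\Theta(\log n)$ further steps. Starting a fresh phase every $\Theta(\log n)$ steps, with the windows overlapping, guarantees that at each moment at least one active phase holds a correct copy of $Q$. The constructed program outputs the copy of a suitable still-valid phase, an FO-definable choice; applying Proposition~\ref{proposition:acisfo:almost} then yields $\query\in\DynFO$.

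I expect the main obstacle to be exactly this budgeting tension. Since $\prog$ is guaranteed correct only for $\log n$ change steps, within one phase it must bridge \emph{both} the stale recomputation period \emph{and} the subsequent active window; this forces the recomputation to take substantially fewer than $\log n$ steps, which is feasible only because batching a constant number of $\psi$-iterations into one step remains an FO operation. Tuning the constants so that consecutive windows overlap while no phase ever overruns its budget is the core of the argument. A further delicate point is the first $\Theta(\log n)$ change steps, before any phase has completed its initial recomputation: here the relevant snapshot is the fixed empty input $\inp_\emptyset$, so these initial answers can be supplied by a dedicated staggered family of threads that precompute $\calA(\inp_\emptyset)$, and I would treat this startup separately from the steady-state construction above.
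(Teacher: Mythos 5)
Your steady-state construction is essentially the paper's: reduce to \DynFOar via Proposition~\ref{proposition:acisfo:almost}, use $\AC^1=\Ind{\log n}$ to turn the initialisation into $\bigO(\log n)$ applications of a fixed \FOar formula $\psi$, and run staggered threads that each freeze a snapshot, recompute the auxiliary data by batching a constant number of $\psi$-iterations per update step while logging incoming changes, then feed the backlog and the live changes to $\prog$ at a constant rate and serve answers from the resulting correctness window. The paper starts a thread at every step and splits the budget half/half rather than starting one every $\Theta(\log n)$ steps with a small recomputation fraction, but that difference is immaterial. (One detail you should still pin down: interleaving backlogged and live changes applies changes out of their original order, and two changes touching the same tuple can then yield a different structure; the paper consolidates changes per tuple in its cache relations $\Delta^+_i,\Delta^-_i$ so that the replayed sequence provably produces the same input.)

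The genuine gap is the startup. In this model there is no precomputation: apart from the arithmetic relations, the auxiliary data is empty before the first change, and the designated relation $Q$ must already equal $\query(\alpha(\inp_\emptyset))$ after a \emph{single} change. A thread ``precomputing'' $\calA(\inp_\emptyset)$ must still perform $\Theta(\log n)$ iterations of $\psi$ over the full $n$-element domain, so it cannot be ready before $\Omega(\log n)$ update steps have elapsed --- which is precisely the interval you need it for. Hence your dedicated startup threads cannot supply the answers for times $1,\ldots,\Omega(\log n)$, and no rescheduling of full-domain recomputations can fix this, since every such recomputation inherently costs $\Omega(\log n)$ steps. The paper's missing ingredient here is a second, essential use of almost domain independence, beyond the final appeal to Proposition~\ref{proposition:acisfo:almost}: for $t$ below the threshold, the active domain has at most $2t$ elements, so a thread started at time $t/2$ simulates both $\calA$ and $\prog$ on a surrogate domain $D_t$ of size $\min\{2t+c,n\}$ (with $c$ the constant from almost domain independence), on which the initialisation needs only $\bigO(\log t)$ iterations and therefore finishes well before time $t$; the answer is then transported back to the full domain via a maintained bijection and the extension rule that almost domain independence licenses. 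Without this domain-shrinking device, the first $\Omega(\log n)$ answers of your program are unsupported.
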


We do not prove this theorem directly, but instead give a more general result, strengthening the correspondence between depth of the initialising circuit families and number of change steps the query has to be maintained.

\begin{theorem}\label{theorem:fewerChangesGeneral}
Let $f: \N \to \N$ be a first-order constructible function with $f \in \bigO(n)$.
Every $(\AC[f(n)],f(n))$-maintainable, almost domain independent query is in $\DynFO$.
\end{theorem}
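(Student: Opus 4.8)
The plan is to first remove the arithmetic built-ins only at the very end and do all the real work in \DynFOar. Since $\query$ is almost domain independent, Proposition~\ref{proposition:acisfo:almost} reduces the goal to showing $\query \in \DynFOar$, so from now on the simulating program may freely use $\le, +, \times$. The external tool is the characterisation $\AC[f(n)] = \Ind{f(n)}$ recalled in the preliminaries: the initialising algorithm $\calA$ can be realised by a single fixed $\FOar$-formula $\psi$ applied $f(n)$ times to a first-order-definable layer over the input. Hence computing $\calA(\inp,\le)$ is nothing but $f(n)$ successive applications of $\psi$, and a single \DynFO step — which may evaluate any fixed composition of first-order formulas — can perform a constant number of these applications at once. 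This is what lets us ``start over'' (recompute $\calA$ from scratch) in a spread-out manner and ``muddle through'' (run $\prog$) in between.

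I would build a \DynFOar program that runs a \emph{constant} number of parallel copies (\emph{threads}) of the given program $\prog$, each holding its own private copy of $\prog$'s auxiliary relations, together with a global step counter and, per thread, a frozen snapshot of the input and an ordered buffer of pending changes. Each thread passes through three consecutive phases. In the \emph{initialisation} phase it freezes the current input into its snapshot relation (one copying step) and then recomputes $\calA$ on that \emph{fixed} snapshot by iterating $\psi$, a constant number of iterations per change step, while simultaneously appending every incoming change to its buffer. In the \emph{catch-up} phase it replays the buffered changes through $\prog$, a constant number per step, while also feeding the current incoming change into $\prog$, until the buffer is exhausted and the thread is synchronised with the real input. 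In the \emph{in-charge} phase it merely forwards each incoming change to $\prog$, and the program reads the answer to $\query$ from this thread's copy of the designated relation $Q$.

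The main obstacle is latency hiding against the validity budget of $\prog$: recomputing $\calA$ costs $f(n)$ iterations of $\psi$, yet after such a recomputation $\prog$ is guaranteed correct for only $f(n)$ changes, so the changes accumulated during initialisation and catch-up \emph{together with} the in-charge changes must all fit within these $f(n)$ $\prog$-updates. I would resolve this by rate accounting: doing $c_1$ iterations of $\psi$ per step shrinks initialisation to $f(n)/c_1$ steps (and the buffer to the same length), and replaying the buffer at rate $r$ per step while still admitting new changes clears it in $\bigO(f(n))$ steps; choosing the constants $c_1,r$ so that initialisation, catch-up, and a $\Theta(f(n))$-length in-charge window together consume at most $f(n)$ $\prog$-updates keeps every thread inside its validity window. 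The threads are staggered by the length of the in-charge window, so that at every moment exactly one synchronised thread is in charge while its successor is already initialising; as the total lifetime of a thread and the in-charge windows are both $\Theta(f(n))$, a constant number of threads suffices.

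Finally I would discharge the bookkeeping and the start-up. All orchestration data — the current step number, the phase of each thread, and the index of the next buffered change to replay — is first-order definable from the counter using $+$ and $\times$ precisely because $f$ is first-order constructible; and because $f \in \bigO(n)$, each buffer holds $\bigO(n)$ changes whose timestamps can be encoded by a constant number of domain elements, so buffers and snapshots are legitimate auxiliary relations. The initial $\bigO(f(n))$ steps, before the first thread is synchronised, are covered by a dedicated start-up thread initialised from the empty input $\inp_\emptyset$ (spreading the computation of $\calA(\inp_\emptyset,\le)$ over these first steps); during this transient the input is within $\bigO(f(n))$ changes of $\inp_\emptyset$, so the relevant query values are themselves $\AC[f(n)]$-computable and produced by the same spread-out induction. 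Once the pipeline is populated, $\query \in \DynFOar$, and Proposition~\ref{proposition:acisfo:almost} yields $\query \in \DynFO$.
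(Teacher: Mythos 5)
Your overall architecture is essentially the paper's: reduce to \DynFOar via Proposition~\ref{proposition:acisfo:almost}, use $\AC[f(n)]=\Ind{f(n)}$ to spread the initialisation over dynamic steps at a constant number of applications of $\psi$ per step, buffer the changes that arrive meanwhile, replay them at an accelerated rate, and pipeline threads. Your scheduling variant --- constantly many threads each in charge for a $\Theta(f(n))$-window, instead of the paper's $f(n)$ threads each responsible for a single time point (indexed by extra arity) --- is a legitimate and arguably cleaner packaging; the rate accounting you sketch does close, e.g.\ with $c_1=4d$ and $r=1$ one gets initialisation and catch-up of length $f(n)/4$ each and an in-charge window of $f(n)/2$, for a total of at most $f(n)$ updates of $\prog$ per thread.

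The genuine gap is the start-up phase. Your dedicated start-up thread, snapshotting the empty input at time $0$ and spreading $\calA(\inp_\emptyset,\leq)$ over the following steps, is not in charge before time $\Theta(f(n))$, because the depth of the initialising circuit is $f(n)$ where $n$ is the \emph{domain} size, not the number of changes performed so far; so the interval $[1,\Theta(f(n)))$ remains uncovered. The sentence meant to cover it --- that during the transient the query values are ``$\AC[f(n)]$-computable and produced by the same spread-out induction'' --- does not help: a computation of depth $f(n)$ spread out at a constant rate per change step completes only at time $\Theta(f(n))$, whereas the answer for time $t<f(n)$ is needed at time $t$. This is exactly where the paper needs almost domain independence a second time (beyond Proposition~\ref{proposition:acisfo:almost}) together with the hypothesis $f\in\bigO(n)$, which you invoke only for encoding buffers: at time $t$ the active domain has $\bigO(t)$ elements, so by almost domain independence the query may be evaluated relative to a sub-domain of size $\min\{2t+c,n\}$; on that sub-domain the initialisation needs only $\bigO(f(\bigO(t)))=\bigO(t)$ iterations of $\psi$, which a thread started at time $t/2$ can compress into at most $t/4$ steps by performing a larger (but still constant) number of iterations per step, leaving $t/4$ steps to catch up, and the result is then translated back to the full domain via a maintained bijection. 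Without this domain-shrinking device (or an equivalent one), your construction does not answer the query correctly during the first $\Theta(f(n))$ change steps, so the proof as written is incomplete.
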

  
\begin{proof}
Let $f: \N \to \N$ be first-order constructible with $f \in \bigO(n)$ and assume that an $\AC[f(n)]$ algorithm $\calA$ and a dynamic program $\prog$ witness that an almost domain independent query $\query$ is $(\AC[f(n)],f(n))$-maintainable. Thanks to Proposition~\ref{proposition:acisfo:almost} it suffices to construct a dynamic program $\prog'$ that witnesses $\query \in \DynFOar$. We restrict ourselves to graphs, for simplicity.

The overall idea is to use a simulation technique similar to the ones used in Proposition~\ref{proposition:acisfo:almost} and in \cite[Theorem 8.1]{SchwentickVZ18}. We first present the computations performed by $\prog'$ intuitively and later explain how they can be expressed in first-order logic.

We consider each application of one change as a \emph{time step} and refer to the graph after time step $t$ as $G_t=(V,E_t)$.
After each time step $t$, the program $\prog'$ starts a thread that is in charge of answering the query at time point $t + f(n)$.
Each thread works in two phases, each lasting $\frac{f(n)}{2}$ time steps. Roughly speaking, the first phase is in charge of simulating $\calA$ and in the second phase $\prog$ is used to apply all changes that occur from time step $t+1$ to time step $t + f(n)$. 
Using $f(n)$ many threads, $\prog'$ is able to answer the query from time point $f(n)$ onwards.

We now give more details on the two phases and describe afterwards how to deal with time points earlier than $f(n)$.
For the first phase, we make use of the equality  $\AC[f(n)]=\Ind{f(n)}$, see \cite{ImmermanDC}. Let $\psi$ be an inductive formula that is applied $d f(n)$ times, for some $d$, to get the auxiliary relations $\calA(G,\leq)$ for a given graph $G$ and the given order $\leq$.
The program $\prog'$ applies $\psi$ to $G_t$, $2d$ times during each time step, and thus the result of $\calA$ on $(G_t, \leq)$ is obtained after $\frac{f(n)}{2}$ steps. The change operations that occur during these steps are not applied to $G_t$ directly but rather stored in some additional relation. If some edge $e$ is changed multiple times, the stored change for $e$ is adjusted accordingly.

During the second phase the $\frac{f(n)}{2}$ stored change operations and the $\frac{f(n)}{2}$ change operations that happen during the next $\frac{f(n)}{2}$ steps are applied to the state after phase 1. To this end, it suffices for $\prog'$ to  apply two changes during each time step by simulating two update steps of~$\prog$. Observe that $\prog'$ processes the changes in a different order than they actually occur. However, both change sequences result in the same graph.
Since $\prog$ can maintain $\query$ for $f(n)$ changes, the program $\prog'$ can give the correct query answer for $\query$ about $G_{t+f(n)}$ at the end of phase 2, that is, at time point $t+f(n)$.

The following auxiliary relations are used by thread $i$:
\begin{itemize}
 \item a binary relation $\hat{E}_i$ that contains the edges currently considered by the thread,
 \item binary relations $\Delta^+_i$ and $\Delta^-_i$ that cache edges inserted and deleted during the first phase, respectively,
 \item a relation $\hat{R}_i$ for each auxiliary relation $R$ of $\prog$ with the same arity,
 \item and a relation $C_i$ that is used as a counter: it contains exactly one tuple which is interpreted as the counter value, according to its position in the lexicographic order induced by $\leq$.
\end{itemize} 
When thread $i$ starts its first phase at time point $t$, it sets $\hat{E}_i$ to $E_t$ and the counter $C_i$ to $0$; its other auxiliary relations are empty in the beginning. 
Whenever an edge $(u,v)$ is inserted (or deleted), $\hat{E}_i$ is not changed, $(u,v)$ is inserted into $\Delta^+_i$ (or $\Delta^-_i$)\footnote{If an edge $(u,v)$ with $(u,v) \in \Delta^-_i$ is inserted, it is instead deleted from $\Delta^-_i$, and accordingly for deletions of edges $(u,v) \in \Delta^+_i$.}, and the counter $C_i$ is incremented by one. The relations $\hat{R}_i$ are replaced by the result of applying their defining first-order formulas $2d$ times, as explained above.

When the counter value is at least $\frac{f(n)}{2}$, %
the thread enters its second phase and proceeds as follows. 
When an edge $(u,v)$ is inserted (or deleted), it applies this change and the change implied by the lexicographically smallest tuple in $\Delta^+_i$ and $\Delta^-_i$, if these relations are not empty: it simulates $\prog$ for these changes using the edge set $\hat{E}_i$ and auxiliary relations $\hat{R}_i$, replaces the auxiliary relations accordingly and adjusts $\hat{E}_i, \Delta^+_i$ and $\Delta^-_i$. Again, the counter $C_i$ is incremented.
If the counter value is $f(n)$, the thread's query result is used as the query result of $\prog'$, and the thread stops. %
All steps are easily seen to be first-order expressible.

So far we have seen how $\prog'$ can give the query answer from time step $f(n)$ onwards. For time steps earlier than $f(n)$ the approach needs to be slightly adapted as the program does not have enough time to simulate~$\calA$. The idea is that for time steps $t < f(n)$ the active domain is small and, exploiting the almost domain independence of $\query$, it suffices to compute the query result with respect to this small domain extended by $c$ isolated elements, where $c$ is the constant from almost domain independence. The result on this restricted domain can afterwards be used to define the result for the whole domain. 

Towards making this idea more precise, let $n_0, b$ be such that $bn \geq f(n)$ for all $n \geq n_0$. We focus on explaining how $\prog'$ handles structures with $n \geq n_0$, as small graphs with less than $n_0$ nodes can be dealt with separately.

The program $\prog'$ starts a new thread at time $\frac{t}{2}$ for the graph $G_{\frac{t}{2}}$ with at most $\frac{t}{2}$ edges. Such a thread is responsible for providing the query result after $t$ time steps, and works in two phases that are similar to the phases described above. It computes relative to a domain $D_t$ of size $\min\{2t+c,n\}$, where $c$ is the constant from (almost) domain independence.  The size of $D_t$ is large enough to account for possible new nodes used in edge insertions in the following $\frac{t}{2}$ change steps. The domain $D_t$ is chosen as the first $|D_t|$ elements of the full domain (with respect to $\leq$). The program $\prog'$ maintains a bijection $\pi$ between the active domain $D_{G}$ of the current graph $G$ and the first $|D_G|$ elements of the domain to allow a translation between $D_G$ and $D_t$.

 The first phase of the thread for $t$ starts at time point $\frac{t}{2}+1$ and applies $\psi$ for $8bcd$ times during each of the next time steps. This simulation of $\calA$ is finished after at most $\frac{(2t+c)b}{8bc} \leq \frac{t}{4}$ time steps, and therefore the auxiliary relations are properly initialised at time point $\frac{3t}{4}$.

In the second phase, starting at time step $\frac{3t}{4}+1$ and ending at time step $t$, the changes that occurred in the first phase are applied, two at a time. The thread is then ready to answer $\query$ at time point $t$. Since at time $t$ at most $2t$ elements are used by edges, the almost domain independence of $\query$ guarantees that the result computed by the thread relative to $D_t$ coincides with the $D_t$-restriction of the query result for $\pi(G_t)$. The query result for $G_t$ is obtained by translating the obtained result according to $\pi^{-1}$, and extending it to the full domain. More precisely, a tuple $\tpl t$ is included in the query result, if it can be generated from a tuple $\tpl t'$ of the restricted query result by replacing elements from $\pi^{-1}(D_t) \setminus \adom(G_t)$ by elements from $V \setminus  \adom(G_t)$ (under consideration of equality constraints among these elements). Again, all steps are easily seen to be first-order definable using the auxiliary relations from above.

The above presentation assumes a separate thread for each time point and each thread uses its own relations. These threads can be combined into one dynamic program as follows. Since at each time point at most $f(n)$ threads are active, we can number them in a round robin fashion with numbers $1,\ldots,f(n)$ that we can encode by tuples of constant arity. The arity of all auxiliary relations is incremented accordingly and the additional dimensions are used to indicate the number of the thread to which a tuple belongs.
\end{proof}

\section{Warm-up: 3-Colourability} \label{section:3col}

In this section, we show that the 3-colourability problem \problem{3Col} for graphs of bounded treewidth can be maintained in \DynFO. Given an undirected graph, \problem{3Col} asks whether its vertices can be coloured with three colours such that adjacent vertices have different colours. 
\begin{theorem}
  For every $k$, \problem{3Col} is in \DynFO for  graphs with treewidth at most $k$.\label{theo:3col}
\end{theorem}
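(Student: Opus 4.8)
The plan is to reduce the statement to the machinery already assembled and to instantiate it for the single query \problem{3Col}. Since a graph is $3$-colourable precisely when it satisfies the standard \MSO-sentence asserting the existence of three vertex sets $X_1,X_2,X_3$ that cover all vertices and such that no edge has both endpoints in the same $X_i$, the query \problem{3Col} is \MSO-definable and hence almost domain independent by Proposition~\ref{prop:msoadi}. By Theorem~\ref{theorem:fewerChanges} it therefore suffices to prove that \problem{3Col} is $(\AC^1,\log n)$-maintainable on graphs of treewidth at most $k$; that is, I must exhibit an $\AC^1$-computable initialisation of auxiliary relations together with a first-order dynamic program that answers the query correctly for $\log n$ subsequent change steps.

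For the initialisation I would first compute a nice tree decomposition $(T,B)$ of the initial graph using Lemma~\ref{prelims:decompositions:nice}; this runs in logarithmic space and hence lies in $\AC^1$, which contains \LOGSPACE. On top of $(T,B)$ I would run the classical bottom-up colouring dynamic program: for each triangle $\delta$ of $T$ store the set of colourings of the interface $B(\delta)$ that extend to a proper $3$-colouring of the inner graph $G(\delta)$. Because bags, and hence interfaces $B(\delta)=B(i_0)\cup B(i_1)\cup B(i_2)$, have size bounded in terms of $k$, each such type is a constant-size object, and the type at a triangle is a first-order function of the types of constantly many sub-triangles together with the edges inside its interface. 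Since $T$ has depth $\bigO(\log n)$, the full table of types is produced by $\bigO(\log n)$ rounds of one first-order formula, i.e.\ in $\Ind{\log n}=\AC^1$.

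For the dynamic phase I would keep the decomposition $(T,B)$ \emph{frozen} for the entire window of $\log n$ change steps, deliberately accepting that incoming edges may render it invalid rather than trying to update it. Edge deletions only relax constraints and preserve validity, and an inserted edge whose endpoints already share a bag acts locally; in both cases the affected triangle types are recomputed by composing the types of constantly many sub-triangles along one branch of $T$, which is first-order expressible. The answer to \problem{3Col} is then read off the type of the root triangle.

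The main obstacle is an inserted edge $(u,v)$ whose endpoints occur in no common bag: such an edge violates property~(2) of a tree decomposition, and since a valid decomposition can change \emph{globally} after a single edge insertion, there is no local repair of $(T,B)$. The saving observation is that in a window of only $\log n$ steps at most $\log n$ such crossing edges can arise, so only $\bigO(\log n)$ vertices ever become endpoints of crossing edges; I would track this bounded set of \emph{special} vertices and maintain the set of colourings of the special vertices that are both proper on the crossing edges and extendable to a proper colouring of the remaining graph via the frozen triangle types. Crucially, a colouring of $\bigO(\log n)$ vertices by three colours is one of $3^{\log n}=\mathrm{poly}(n)$ possibilities and is thus encodable by a constant-arity tuple, so this set is an ordinary auxiliary relation; a single crossing insertion imposes the one extra constraint ``$u$ and $v$ receive distinct colours'', updating the relation by a first-order formula, and the query answer is its non-emptiness. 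The technical heart of the argument — to be carried out in general in Section~\ref{section:mso} — is exactly the composition lemma for triangle types, used here to combine the frozen feasibility information of $(T,B)$ with the fixed colours of the scattered special vertices within a single first-order update step over the available order and arithmetic.
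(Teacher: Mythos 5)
Your overall strategy is the paper's: reduce via Theorem~\ref{theorem:fewerChanges} and almost domain independence to $(\AC^1,\log n)$-maintainability, initialise in $\AC^1$ with a nice tree decomposition (Lemma~\ref{prelims:decompositions:nice}) and a bottom-up table recording, for each triangle $\delta$, which colourings of $B(\delta)$ extend to the inner vertices of $G(\delta)$, then freeze the decomposition and exploit that only $\bigO(\log n)$ vertices are touched so that a colouring of them fits into constantly many first-order variables. However, there is a genuine gap exactly at the step you defer to ``the composition lemma for triangle types'': you never say how the frozen per-triangle information can be applied to the residual graph once the special vertices are scattered through the tree. The precomputed relations only answer questions about \emph{triangles}, i.e., regions of $T$ delimited by one top bag and at most two bottom bags. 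To cover the whole graph you need (i) to promote each affected \emph{vertex} to a whole special \emph{bag} containing it, so that the boundaries of the untouched regions are bags --- the objects the tables are indexed by --- and (ii) to close the set of special bags under least common ancestors, so that every maximal clean region really is a triangle rather than a region with unboundedly many bottom boundary bags, for which no precomputed type exists. This closure is also what yields the $4\log n$ bound on special bags and, via the connectivity property of tree decompositions, guarantees that affected vertices are never \emph{inner} vertices of a clean triangle, which is the reason the frozen tables stay valid at all. None of this appears in your write-up, and it is the one non-routine idea in the proof.

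A secondary problem is your split into ``deletions and same-bag insertions, handled by recomputing triangle types'' versus ``crossing insertions, handled by special vertices''. The triangles whose stored type is invalidated by a change inside a bag $i$ are all triangles having $i$ strictly in their interior; these are parameterised by an ancestor of $i$ and up to two incomparable descendants, not ``constantly many sub-triangles along one branch'', and after several changes the recombination would have to split at all changed bags simultaneously --- which is precisely the clean-triangle decomposition you left unspecified. The paper avoids this entirely: the tables are never updated, \emph{every} changed edge (inserted or deleted, crossing or not) simply makes both endpoints affected, and a single query-time formula guesses a colouring of all special-bag vertices, checks the current edges among them directly, and consults the frozen tables only for clean triangles.
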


The remainder of this section is dedicated to a  proof for this theorem. Thanks to Theorem~\ref{theorem:fewerChanges} and the fact that \problem{3Col} is almost domain independent, it suffices to show that \problem{3Col} is $(\AC^1,\log n)$-maintainable for graphs with treewidth at most $k$. In a nutshell, our approach can be summarised as follows. 

The $\AC^1$ initialisation computes a nice tree decomposition $(T,B)$ of width at most $4k+5$ and maximum bag size $\ell\df 4k+6$, as well as information about the 3-colourability of induced subgraphs of $G$. More precisely, it computes, for each triangle $\delta$ of~$T$ and each 3-colouring $C$ of the nodes of $B(\delta)$, whether there exists a colouring $C'$  of the inner vertices of $G(\delta)$, such that all edges involving at least one inner vertex are consistent
 with $C\cup C'$.

During the following $\log n$ change operations, the dynamic program does not need to do much. 
It only maintains a set $S$ of \emph{special} bags: for each \emph{affected} graph node $v$ that participates in any changed (i.e.~deleted or inserted) edge, $S$ contains one bag in which $v$ occurs. Also, if two bags are special, their least common ancestor is considered special and is included in $S$. 
It will be guaranteed that there are at most $4 \log n$ special bags. With the auxiliary information, a first-order formula $\varphi$ can test whether $G$ is 3-colourable as follows. By existentially quantifying $8\ell$ variables, the formula can choose two bits of information for each of the at most $4\ell\log n$ nodes in special bags. For each such node, these two bits are interpreted as encoding of one of three colours and the formula $\varphi$  checks that this colouring of the special bags can be extended to a colouring of $G$. This can be done with the help of the auxiliary relations computed during the initialisation which provide all necessary information about colourability of subgraphs induced by triangles consisting of special bags. %

Before we give a detailed proof, we need some more notation. Let $G = (V,E)$ be a graph and $(T,B)$ with $T = (I,F,r)$  a nice tree decomposition with bags of size at most $\ell$. A \emph{colouring} of a set $U$ of vertices is just a mapping from $U$ to $\{1,2,3\}$. An edge $(u, v)$ is \emph{properly coloured} if  $u$ and $v$ are mapped to different colours. For a triangle $\delta$, we say that a colouring $C$ of $B(\delta)$ is  \emph{consistent}, if there exists a colouring $C'$ of the inner vertices of $G(\delta)$ such that all edges of $G(\delta)$ are properly coloured by $C\cup C'$. Recall that $G(\delta)$ only contains edges that involve at least one inner vertex. 

We say that a tuple $\tpl v(i)=(v_1,\ldots,v_{\ell})$ \emph{represents a tree node} $i\in I$ (or, the bag $B(i)$) if $B(i)=\{v_1,\ldots,v_{\ell}\}$.
A tuple $\tpl v(\delta)=(\tpl v(i_0), \tpl v(i_1), \tpl v(i_2))$ \emph{represents the triangle} $\delta=(i_0,i_1,i_2)$.
If $\tpl v(\delta) = (v_1, \ldots, v_{3\ell})$ represents the triangle $\delta$ and $\tpl c$ is a tuple from  $\{1,2,3\}^{3\ell}$ such that $c_{j} = c_{j'}$ whenever $v_{j} = v_{j'}$ for  $j, j'\in\{1,\ldots,3\ell\}$, we write $C_{\tpl c,\tpl v}$ for the colouring  of $B(\delta)$ defined by $C_{\tpl c,\tpl v} (v_j)=c_j$, for every $j\in\{1,\ldots,3\ell\}$.

\begin{proofof}{Theorem \ref{theo:3col}}
Let $G = (V,E)$ be a graph of treewidth at most $k$. 

The $\AC^1$ initialisation first computes a $d$-nice tree decomposition $(T,B)$ with bags of size at most $\ell= 4k+6$, for the constant $d$ guaranteed to exist by Lemma \ref{prelims:decompositions:nice}, and the predecessor relation $\preceq$ of $T$. Also, it initialises the relations $\leq$ and $\BIT$. 
Next, it computes the following auxiliary relations in a bottom-up fashion with respect to $T = (I, F, r)$.  For each tuple $\tpl c\in\{1,2,3\}^{3\ell}$ the auxiliary relation $R_{\tpl c}$ contains all tuples $\tpl v(\delta)$ from $V^{3\ell}$ that represent some triangle $\delta$ such that $C_{\tpl c,\tpl v}$ is a consistent colouring of~$B(\delta)$. 

The auxiliary relations are computed inductively and bottom-up, that is, the auxiliary information for a tuple representing a triangle $(i_0, i_1, i_2)$ is computed by using the information for the triangles rooted at the two children of $i_0$. It will be easy to see that each inductive step can be defined by a first-order formula and, since $T$ has depth $d \log n$, the induction reaches a fixpoint after $d \log n$ iterations. Therefore the initialisation is in $\Ind{\log n} = \AC[\log n]$. We recall that triangles can be open, unary or proper, depending on whether they are induced by a single bag, by two, or by three bags. 

For the base case, a tuple $\tpl v$ representing an open triangle corresponding to a leaf of $T$ is in $R_{\tpl c}$ if and only if, for each $j\in\{1,\ldots\ell\}$, $c_j=c_{\ell+j}=c_{2\ell+j}$, since there are no inner vertices to worry about. 

The inductive cases are straightforward. We only describe  in detail the case of a proper triangle $\delta=(i_0,i_1,i_2)$ where $i_1$ and $i_2$ are in different subtrees of $i_0$; the other cases are similar. Let $i'_1$ and $i'_2$ be the two children of $i_0$ such that $i'_1\preceq i_1$ and $i'_2\preceq i_2$. Figure~\ref{figure:triangle-ind} illustrates this situation. By the induction hypothesis, the auxiliary information for all triangles rooted at $i'_1$ and $i'_2$ has already been computed. 
A tuple $\tpl v= \tpl v(\delta)$ is in some $R_{\tpl c}$, if there are tuples $\tpl d, \tpl e \in \{1,2,3\}^{3\ell}$ such that $\tpl u = \tpl v((i'_1,i_1,i_1)) \in R_{\tpl d}$, $\tpl w = \tpl v((i'_2,i_2,i_2)) \in R_{\tpl e}$ and it holds that
\begin{itemize}
\item $C_{\tpl c,\tpl v}$ and $C_{\tpl d,\tpl u}$ coincide on $B(i_0)\cap B(i'_1)$ and on $B(i_1)$,
\item $C_{\tpl c,\tpl v}$ and $C_{\tpl e,\tpl w}$ coincide on $B(i_0)\cap B(i'_2)$ and on $B(i_2)$, and
\item all edges over $B(i_0)\cup B(i'_1)\cup B(i'_2) \cup B(i_1) \cup B(i_2)$ of which at least one node is not in $B(i_0) \cup B(i_1) \cup B(i_2)$ are properly coloured by  $C_{\tpl c,\tpl v}\cup C_{\tpl d,\tpl u}\cup C_{\tpl e,\tpl w}$.
\end{itemize}

 \begin{figure}[t] 
    \begin{center}
    \scalebox{0.9}{
          \begin{tikzpicture}[
            xscale=1.2,
            yscale=0.7,
            font=\footnotesize,
          ]
                \node (r) at (0,5)[mnode, label=right:$r$] {};
                \node[mnode, fill=blue!50] (i0) at (0,4)[label=right:$i_0$] {};
                \node[mnode, fill=blue!50] (i1) at (-0.75,1)[label=right:$i_1$] {};
                \node[mnode, fill=blue!50] (i2) at (0.75,1)[label=right:$i_2$] {};
               \node[mnode, fill=blue!50] (i1p) at (-0.25,3)[label=left:$i'_1$] {};
                \node[mnode, fill=blue!50] (i2p) at (0.25,3)[label=right:$i'_2$] {};

                \node (a) at (-3.25,0)[invisible] {};
                \node (b) at (-2.75,0)[invisible] {};
                \node (c) at (-1.25,0)[invisible] {};
                \node (d) at (-0.25,0)[invisible] {};
               \node (m) at (0,0)[invisible] {};
                \node (e) at (0.25,0)[invisible] {};
                \node (f) at (1.25,0)[invisible] {};
                \node (g) at (2.75,0)[invisible] {};
                \node (h) at (3.25,0)[invisible] {};
      
                \draw [decoration={zigzag,segment length=4,amplitude=.9},line join=round] (r) edge[decorate]  (i0);
                \draw [line join=round] (i0) edge[decorate]  (i1p);
                \draw [line join=round] (i0) edge[decorate]  (i2p);
               \draw [decoration={zigzag,segment length=4,amplitude=.9},line join=round] (i1p) edge[decorate]  (i1);
                \draw [decoration={zigzag,segment length=4,amplitude=.9},line join=round] (i2p) edge[decorate]  (i2);

                \begin{pgfonlayer}{background}
                  \begin{scope}[very thin]
                \draw[fill=blue!10] (b.center) -- (g.center) -- (i0.center) -- cycle;
                \draw[fill = yellow!50, opacity=0.4] (b.center) -- (m.center) -- (i1p.center) -- cycle;
               \draw[fill = brown!50, opacity=0.4] (m.center) -- (g.center) -- (i2p.center) -- cycle;
                \draw[fill=white] (c.center) -- (d.center) -- (i1.center) -- cycle;
                \draw[fill=white] (e.center) -- (f.center) -- (i2.center) -- cycle;
                \draw (a.center) -- (h.center) -- (r.center) -- cycle;
              \end{scope}

               \end{pgfonlayer}
      \end{tikzpicture}

    }
    \caption{Illustration of the inductive step in the computation of colourability information for triangles in the proof of Theorem \ref{theo:3col}. }\label{figure:triangle-ind}
    \end{center}%
  \end{figure}
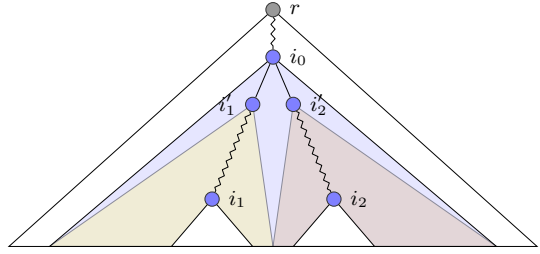

We next describe how a dynamic program can maintain 3-colourability for $\log n$ change steps starting from the above initial auxiliary relations with the help of an additional $\ell$-ary relation $S$ and another binary relation $N$. Whenever an edge $(u,v)$ is inserted into or deleted from $E$, we consider both $u$ and $v$ as \emph{affected}. With each affected graph node $v$ we associate a tree node $i(v)\in I$ such that $v\in B(i(v))$. Tree nodes of the form $i(v)$ for affected nodes $v$ are called \emph{special}. Furthermore, if node $i$ is the least common ancestor of two special nodes $i_1,i_2$ it becomes special, as well. The dynamic program keeps track of all special nodes using the relation $S$ which contains all tuples $\tpl v$ that represent some special node. Furthermore, using the relation $N$ it maintains a bijection between the first $\ell|S|$ nodes of $V$ with respect to the linear order $\leq$ and the graph nodes in $S$. We call a triangle $\delta=(i_0,i_1,i_2)$ of $T$ \emph{clean} if 
there are no special nodes in $T(\delta)$ apart from $i_0, i_1, i_2$.

It only remains to describe how a first-order formula can check 3-colourability of $G$ given the relation $S$ and the relations $R_{\tpl c}$.

We note first that within $\log n$ steps at most $2\log n$ graph nodes can be affected resulting in at most $4\log n$ special tree nodes altogether (since each new special node can contribute at most one new least common ancestor of special nodes). That is, the set $Z$ of graph nodes occurring in some tuple of $S$ contains at most $4\ell\log n$ nodes. A colouring of $Z$ can be represented by $8\ell\log n$ bits and can thus be guessed by a first-order formula by  quantifying over $8\ell$ first-order variables $x_1,\ldots,x_{4\ell},y_1,\ldots,y_{4\ell}$. More precisely, the $j$-th bits of $x_r$ and $y_r$ together represent the colour of the special node at position $(r-1)\log n+j$ with respect to the linear order represented by $N$. The first-order formula can easily check that the colouring $C$ of $S$ represented  by  $x_1,\ldots,x_{4\ell},y_1,\ldots,y_{4\ell}$ is consistent for edges between special nodes and that for each clean triangle of $T$ induced by special nodes it can be extended to a consistent colouring of the inner nodes. The latter information is available in the relations~$R_{\tpl c}$.
\end{proofof}

  \section{MSO Queries} \label{section:mso}
\newcommand{\liff}{\ensuremath{\mathop{\leftrightarrow}}\xspace}
\newcommand{\lif}{\ensuremath{\mathop{\leftarrow}}\xspace}
\newcommand{\loif}{\ensuremath{\mathop{\rightarrow}}\xspace}
\newcommand{\selem}[1]{\langle #1 \rangle}

In this section we prove a dynamic version of Courcelle's Theorem: all \MSO properties can be maintained in \DynFO for graphs with bounded treewidth. More precisely, for a given \MSO sentence $\varphi$ we consider the model checking problem $\problem{MC}_\varphi$ that asks whether a given graph $G$ satisfies $\varphi$, that is, whether $G \models \varphi$ holds.

\begin{theorem} \label{theorem:mso}
For every \MSO sentence $\varphi$ and every $k$, $\problem{MC}_\varphi$ is in \DynFO for graphs with treewidth at most $k$. 
\end{theorem}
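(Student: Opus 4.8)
The plan is to follow the blueprint of the proof of Theorem~\ref{theo:3col}, replacing ``consistency of a $3$-colouring'' by ``realisation of an \MSO-type''. Since every \MSO-definable query is almost domain independent by Proposition~\ref{prop:msoadi}, Theorem~\ref{theorem:fewerChanges} reduces the claim to showing that $\problem{MC}_\varphi$ is $(\AC^1,\log n)$-maintainable for graphs of treewidth at most $k$. I would fix $d$ as the quantifier depth of $\varphi$ and work throughout with depth-$d$ \MSO types, of which there are only finitely many over a fixed finite signature; this finiteness is what keeps everything first-order expressible.

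For the $\AC^1$ initialisation I would first compute a nice tree decomposition $(T,B)$ of width at most $4k+5$ together with $\preceq$, $\leq$ and $\BIT$, exactly as before (Lemma~\ref{prelims:decompositions:nice}). Then, in place of the relations $R_{\tpl c}$, I would compute for every triangle $\delta$ the depth-$d$ \MSO type of the induced subgraph $G(\delta)$ in which the interface nodes $B(\delta)$ are marked by constants in the fixed order given by $\tpl v(\delta)$. Concretely, for each type $\tau$ there is an auxiliary relation $R_\tau$ collecting all tuples $\tpl v(\delta)$ such that $(G(\delta),\tpl v(\delta))$ has type $\tau$. The computation proceeds bottom-up over $T$: for a proper triangle $\delta=(i_0,i_1,i_2)$ the type of $G(\delta)$ should be obtained from the types stored for the two child triangles $(i'_1,i_1,i_1)$ and $(i'_2,i_2,i_2)$ together with the adjacency pattern among the interface nodes in $B(i_0)\cup B(i_1)\cup B(i_2)$. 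Because the type of $\delta$ depends only on this constant amount of type information and on a bounded adjacency pattern, each inductive step is a fixed finite function and hence first-order definable; since $T$ has depth $\bigO(\log n)$, iterating it yields a fixpoint after $\bigO(\log n)$ rounds, so the initialisation lies in $\Ind{\log n}=\AC[\log n]=\AC^1$.

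The maintenance for $\log n$ steps and the final query are structurally identical to the $3$-colourability case. The dynamic program maintains the set $S$ of special bags and the bijection $N$ exactly as before, so that after $\log n$ changes there are at most $4\log n$ special bags and hence at most $4\ell\log n$ special graph nodes. A first-order formula then existentially guesses, using $\bigO(\ell)$ first-order variables whose bits encode the relevant data for the special nodes, a depth-$d$ type for each clean triangle spanned by special bags, reading these candidate types off the precomputed relations $R_\tau$. Using the composition theorem once more it glues the types of these clean triangles along the skeleton of special bags into a single type for all of $G$ and checks whether that type contains $\varphi$. The changed edges are by construction incident only to special nodes, so they are accounted for explicitly in this gluing; this is precisely why a possibly invalidated tree decomposition causes no harm.

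The main obstacle is the Feferman--Vaught-type composition theorem for \MSO on which both the bottom-up initialisation and the final gluing rely: one must show that the depth-$d$ type of $G(\delta)$ is determined by the depth-$d$ types of the subgraphs induced by its child triangles together with the interface adjacencies, and that this determination respects the marking of interface nodes by constants. The delicate points are handling second-order (set) quantification across the gluing interface---which forces the interfaces to be of bounded size so that the relevant assignments to the shared nodes range over a constant set---and verifying that edges crossing between the two child subgraphs are attributed to exactly one side, so that $G(\delta)$ really is the clean gluing of the two pieces along $B(\delta)$. Establishing this composition theorem in a form uniform enough to be applied along the whole tree, and observing that its finitely many cases make it first-order definable, is the technical heart of the argument.
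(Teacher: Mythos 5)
Your reduction via Proposition~\ref{prop:msoadi} and Theorem~\ref{theorem:fewerChanges}, the $\AC^1$ initialisation of a nice tree decomposition with per-triangle type relations computed bottom-up (each inductive step being a fixed finite composition function, hence \FO-definable), and the maintenance of the special bags $S$ and the bijection $N$ all match the paper. But the final evaluation step --- ``glue the types of the clean triangles along the skeleton of special bags into a single type for all of $G$ and check whether that type contains $\varphi$'' --- has a genuine gap, and it sits exactly where the paper identifies the main difficulty. The skeleton of special bags has $\Theta(\log n)$ nodes, so this gluing is an unbounded iteration of the composition function; one could try to circumvent that by guessing the run of the composition (the intermediate types) in $\bigO(\log n)$ bits, but the $\Theta(\log n)$ inserted edges defeat this: they join affected nodes lying in \emph{distant} parts of the skeleton, so any intermediate type in a tree-shaped composition would have to keep $\Theta(\log n)$ such nodes as distinguished constants in order to account for the pending cross-edges later. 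The number (and description length) of depth-$d$ types over a signature with $\Theta(\log n)$ constants is not bounded, so the ``fixed finite function'' argument collapses precisely at this step. Your remark that the changed edges ``are accounted for explicitly in this gluing'' is the assertion that needs proof, and it fails for pairwise gluing along the old tree.

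The paper resolves this with a composition theorem of a different shape (Theorem~\ref{theorem:center}): all nodes of special bags, together with one identifier per maximal clean triangle, are collected into a single \emph{center} $C$ of size $\bigO(\log n)$ and constant connection width; the clean triangles become petals attached to $C$. All inserted edges then live entirely inside $C$. Via Shelah's generalised sums one gets an \MSO sentence $\psi$ over the indicator structure $\calB$ (namely $\restrict{G}{C}$ expanded by unary type predicates for the petals) with $G \models \varphi \Leftrightarrow \calB \models \psi$, and --- this is the second ingredient your proposal omits --- the \MSO set quantification of $\psi$ over the logarithmic-size $C$ is simulated in first-order logic using a relation $\text{Sub}$ that encodes subsets of $C$ by constant-length tuples (Proposition~\ref{prop:log}). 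A single block of existential guessing, as in the 3-colourability warm-up, would in any case not suffice for sentences with alternating set quantifiers. A further, more minor, repair: the petal types stored during initialisation must be of depth $d'$, the quantifier depth produced by the composition theorem (the maximum depth of the component formulas $\chi_j$ from Theorem~\ref{theorem:shelah}), which in general exceeds the quantifier depth $d$ of $\varphi$; computing only depth-$d$ types during the initialisation would leave the final step without the information it needs.
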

Since, for every $k$,  guarded second-order logic (\GSO) has the same expressive power as  \MSO on graphs with treewidth at most $k$ \cite[Theorem 2.2]{Courcelle94}, we can immediately conclude the following corollary.
\begin{corollary} \label{cor:gso}
For every \GSO sentence $\varphi$ and every $k$, $\problem{MC}_\varphi$ is in \DynFO for graphs with treewidth at most $k$. 
\end{corollary}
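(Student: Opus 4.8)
The corollary is immediate: on graphs of treewidth at most $k$ every \GSO sentence is equivalent to an \MSO sentence \cite[Theorem~2.2]{Courcelle94}, so $\problem{MC}_\varphi$ for a \GSO sentence $\varphi$ coincides with $\problem{MC}_{\varphi'}$ for an equivalent \MSO sentence $\varphi'$, and Theorem~\ref{theorem:mso} applies. The substance therefore lies in Theorem~\ref{theorem:mso}, which I would prove by mirroring the \problem{3Col} proof (Theorem~\ref{theo:3col}) almost verbatim, replacing ``consistent $3$-colourings of $B(\delta)$'' by ``depth-$d$ \MSO types of the interface-labelled subgraphs $(G(\delta),\tpl v(\delta))$'', where $d$ is the quantifier depth of $\varphi$. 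Since \MSO-definable queries are almost domain independent (Proposition~\ref{prop:msoadi}), Theorem~\ref{theorem:fewerChanges} reduces the claim to showing that $\problem{MC}_\varphi$ is $(\AC^1,\log n)$-maintainable on graphs of treewidth at most $k$.

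For the $\AC^1$ initialisation I would first compute, as in Theorem~\ref{theo:3col}, a $d$-nice tree decomposition $(T,B)$ of width at most $4k+5$ (Lemma~\ref{prelims:decompositions:nice}) together with $\preceq$, $\leq$ and $\BIT$. Then, for every depth-$d$ \MSO type $\tau$ (there are only constantly many, depending on $d$ and $k$), I would compute a relation $R_\tau$ collecting all tuples $\tpl v(\delta)$ representing a triangle $\delta$ for which $(G(\delta),\tpl v(\delta))$ realises $\tau$, the interface vertices $B(\delta)$ being named by the constants. This is computed bottom-up along $T$: the type of a (proper, unary, or open) triangle rooted at $i_0$ is a fixed function of the types of the two sub-triangles rooted at the children of $i_0$, of the edges present at $i_0$, and of the identification of shared interface vertices. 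Because there are only finitely many types, each inductive step is first-order definable, and since $T$ has depth $O(\log n)$ the whole computation lies in $\Ind{\log n}=\AC[\log n]=\AC^1$.

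The $\log n$-step maintenance and the final \FO query then follow the \problem{3Col} case. I would maintain the set $S$ of special bags (bags of affected nodes, closed under least common ancestors; at most $4\log n$ of them) and the bijection $N$. The structural point to verify is that every changed edge runs between two special-bag vertices, so that each clean triangle retains exactly the subgraph, and hence the precomputed type, it had at initialisation time. The $O(\log n)$-depth composition along the special skeleton cannot be iterated in \FO, so instead I would \emph{guess and verify}: the formula existentially guesses, for each of the at most $4\log n$ special bags, the type of the graph below it (a constant-size datum per bag, hence $O(\log n)$ bits overall, encodable in a constant number of first-order variables by the same bit-packing as in Theorem~\ref{theo:3col}), and then checks, for each special bag separately, that its guessed type equals the composition of the precomputed types of the clean triangles hanging directly below it, the guessed types of its special children, and the current (possibly changed) edges among the bag vertices. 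Each such check is a single composition step and is therefore first-order, and the formula accepts iff the guessed type at the root contains $\varphi$.

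The main obstacle is the composition step itself: unlike the classical Feferman--Vaught setting of disjoint unions, the pieces $G(\delta)$ are glued along \emph{shared} interface vertices, so I need a Feferman--Vaught-type composition theorem for \MSO over such gluings (``sums over a common boundary''), stating that the depth-$d$ type of a glued structure is determined by the depth-$d$ types of the parts together with the identification of the shared vertices. Establishing this theorem is where the real work lies, as it is the tool used both in the bottom-up initialisation and in the \FO verification of the guessed types; the remaining ingredients (cleanness of triangles after changes, and packing $O(\log n)$ type bits into first-order variables) are routine adaptations of the \problem{3Col} argument.
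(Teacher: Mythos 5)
Your proof of the corollary itself coincides with the paper's: on graphs of treewidth at most $k$ every \GSO sentence is equivalent to an \MSO sentence by \cite[Theorem 2.2]{Courcelle94}, and Theorem~\ref{theorem:mso} then yields the claim immediately. Since Theorem~\ref{theorem:mso} is available to cite, the corollary is complete at that point.

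Since you went on to sketch a proof of Theorem~\ref{theorem:mso}, one comment on where that sketch diverges from the paper and where it has a gap. The reduction to $(\AC^1,\log n)$-maintainability, the bottom-up computation of triangle types, and the bookkeeping of special bags all match the paper. The problem is your final guess-and-verify step: you propose to guess, for each special bag $i$, a \emph{constant-size} type of ``the graph below $i$'' and to verify each composition locally against the precomputed clean-triangle types, the children's guessed types, and ``the changed edges among the bag vertices''. But an inserted edge may join vertices in two far-apart special bags $j_1$ and $j_2$; relative to their least common ancestor $i$, both endpoints are inner vertices of the subtrees hanging below $i$, so the subgraph below a special bag is no longer separated from the rest of $G'$ by the constant-size interface $B(i)$ --- its true boundary also contains all affected vertices below it, of which there may be $\Theta(\log n)$. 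The intermediate types you want to guess are therefore not constant-size data, and a bag-by-bag local check cannot account for these cross-subtree edges. The paper avoids composing along the skeleton altogether: it collects all special bags into one center $C$ of size $\bigO(\log n)$ but connection width $\bigO(1)$, so that \emph{every} changed edge lies inside $C$ and every petal has a constant-size interface, and then applies Theorem~\ref{theorem:center} --- proved via Shelah's generalised sums precisely because the index structure $C$ is of non-constant size --- to obtain an \MSO sentence over the indicator structure, whose set quantifiers range only over subsets of $C$ and are eliminated by the bit-packing of Proposition~\ref{prop:log}. The ``finite gluing'' composition principle you state at the end (type of the whole determined by the types of the parts plus the identification of shared vertices) is enough for the bottom-up initialisation, where each step combines a bounded number of pieces, but not for the final evaluation, where $\Theta(\log n)$ petals are glued simultaneously onto a logarithmic-size center; that stronger statement is the actual content of the paper's Feferman--Vaught-type theorem.
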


We first give a rough sketch of the proof. Let $\varphi$ be a fixed \MSO formula of quantifier depth $d$ and $k$ a treewidth. We show that $\problem{MC}_\varphi$ is $(\AC^1,\log n)$-maintainable for graphs with treewidth at most $k$.  The construction of a dynamic program for $\problem{MC}_\varphi$ is similar to the one in the proof for \problem{3-Colourability} (Theorem~\ref{theo:3col}). At each point, the program needs to evaluate $\varphi$ on a graph $G'=(V,(E \setminus E^-) \cup E^+)$ with $n$ nodes, where $|E^- \cup E^+|=\bigO(\log n)$, using a nice tree decomposition $(T,B)$ of width $4k+5$ for the initial graph $G=(V,E)$ and auxiliary information on the \MSO type of depth $d$ for each triangle of $T$ (defined as in Section~\ref{section:preliminaries}). 

The graph $G'$ can be viewed as having  a \emph{center} $C\subseteq V$ of logarithmic size, that contains the nodes with edges in $E^- \cup E^+$ and, additionally, for each of these nodes $v$ all nodes of one bag that contains $v$. %

Furthermore, there are node sets $D_1,\ldots,D_\ell$ that, together with $C$, contain all nodes from $V$, such that the sets $D_i-C$ are pairwise disjoint and disconnected, and each set $D_i\cap C$ has size $\bigO(1)$ (cf.\ Figure \ref{figure:connection-width}). From the type information for the triangles, the program can infer the depth-$d$ \MSO types of each $G'[D_i]$.

The situation is similar as in the \emph{Composition Theorem} of Elberfeld, Grohe and Tantau \cite[Theorem 3.8]{ElberfeldGT16}. However, in their setting, the size of $C$ is bounded by a constant, and they show, very roughly, that there is a first-order formula that can be evaluated on a suitable extension of $G[C]$ by information on the types of the $G[D_i]$ to yield the same result as $\varphi$ on~$G$.

We show that in our setting one can construct an \MSO formula $\psi$  such that $G'\models \varphi$ if and only if $\calB \models \psi$, where $\calB$ is a structure, which extends $G'[C]$ by type information about the $G'[D_i]$.  This construction is detailed in Section~\ref{subsction:fv} below. It uses well-known techniques and, in particular, a composition theorem by Shelah (cf.\ Theorem~\ref{theorem:shelah}). 

The dynamic program then uses a first-order formula (to be evaluated in a suitable extension of $G'$ with auxiliary relations) that is obtained from $\psi$ by replacing the second-order quantification over $C$ by first-order quantification over $V$. This is possible, since sets of size $\bigO(\log n)$ over $C$ can be encoded by $\bigO(1)$ elements of $V$.
 
In the remainder of this section we make these ideas more precise. In the next subsection we state and prove a composition theorem for graphs with a center of the form described above. Then, in Section~\ref{subsction:msorest}, we show how this theorem is applied to dynamically evaluate an \MSO formula $\varphi$.

\subsection{A Feferman–Vaught-type composition theorem}\label{subsction:fv}
In the following, we give an adaptation of the Feferman–Vaught-type composition theorem from \cite{ElberfeldGT16} that will be useful for maintaining \MSO properties. 
Intuitively, the idea is very easy, but the formal presentation will come with some technicalities. For ease of presentation, we explain the basic idea for graphs first.

We consider graphs $G=(V,E)$ with a \emph{center} $C\subseteq V$, such that there are sets $D_1,\ldots,D_\ell$ such that, for some $w>0$, the following conditions hold.
\begin{itemize}
\item $C\cup \bigcup_{i=1}^\ell  D_i= V$.
\item For all $i\not=j$, $D_i\cap D_j \subseteq C$.
\item All edges in $E$ have both end nodes in $C$ or in some $D_i$. 
\item For every $i$, $|D_i\cap C|\le w$.
\item For each $i$ there is some element $v_i\in D_i\cap C$ that is not contained in any $D_j$, for $j\not = i$.
\end{itemize}
In this case, we say that $C$ has \emph{connection width} $w$ in $G$. See Figure \ref{figure:connection-width} for an illustration.

  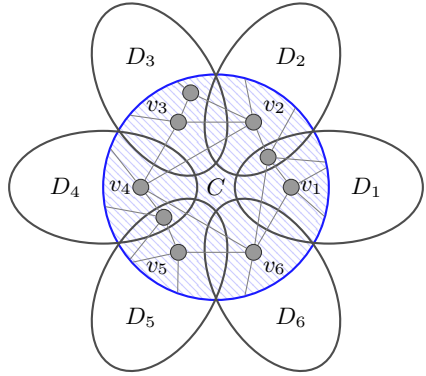
\begin{figure}[t] 
    \centering
          \begin{tikzpicture}[
            font=\footnotesize,
            exedges/.style={gray}
          ]
          \node [draw, blue!90, circle, thick, minimum height = 3cm, pattern=north west lines, pattern color=blue!20, text=black] at (0,0) {$C$};
          \foreach \x/\y in {1/0,2/60,3/120,4/180,5/240,6/300}
          { 
          \node [draw, ellipse, thick, black!70, rotate=\y, minimum width=2.5cm, minimum height=1.5cm] at   (\y:1.5cm)   {};
          \node [] at   (\y:2cm)   {$D_\x$};
          \node[mnode, label={[label distance=-0.13cm]\y:$v_\x$}] (v\x) at (\y:1cm) {}; 
		  }
		  \node[mnode] (v12) at (30:0.8cm) {};
		  \node[mnode] (v33) at (105:1.3cm) {};
		  \node[mnode] (v45) at (210:0.8cm) {};
		  \draw[exedges] (v1) edge (v12) edge (v6); 
		  \draw[exedges] (v12) edge (v6);
		  \draw[exedges] (v2) edge (v12) edge (v33);
		  \draw[exedges] (v4) edge (v2) edge (v45) edge (v6); 
		  \draw[exedges] (v5) edge (v45) edge (v6); 
		  \draw[exedges] (v3) edge (v4) edge (v33) edge (v2);
		  \draw[exedges] (v1) edge (10:1.49cm) edge (-15:1.49cm);
		  \draw[exedges] (v12) edge (13:1.49cm) edge (45:1.49cm);
		  \draw[exedges] (v2) edge (75:1.49cm);
		  \draw[exedges] (v3) edge (140:1.49cm);
		  \draw[exedges] (v33) edge (110:1.49cm);
		  \draw[exedges] (v4) edge (160:1.49cm);
		  \draw[exedges] (v45) edge (190:1.49cm) edge (220:1.49cm);
		  \draw[exedges] (v5) edge (222:1.49cm) edge (250:1.49cm);
		  \draw[exedges] (v6) edge (285:1.49cm);
      \end{tikzpicture}
    \caption{Sketch of a graph with center $C$ of connection width $2$, highlighted in blue, and petals $D_1, \ldots, D_6$.}\label{figure:connection-width}
  \end{figure}

We refer to the sets $D_1,\ldots,D_\ell$ as \emph{petals} and the nodes $v_1,\ldots,v_\ell$ as \emph{identifiers} of their respective petals. We emphasise that $\ell$ is bounded by $|C|$, but not assumed to be bounded by a constant.
Readers who have read the proof of Theorem~\ref{theo:3col} can roughly think of $C$ as the set of vertices from special bags. %

Our goal is to show the following. 
If a graph $G$ has a center $C$ of connection width $w$, for come constant $w$, then $\restrict{G}{C}$ can be extended by the information about the \MSO types of its petals in a suitable way, resulting in a structure $\calB$ with universe $C$, such that \MSO formulas over $G$ have equivalent \MSO formulas over $\calB$.

In the following, we work out the above plan in more detail. Although, for Theorem~\ref{theorem:mso}, we need the composition theorem only for coloured graphs with some constants, we deal in the following with arbitrary signatures.
We fix some relational signature $\Sigma$ and assume that it contains a unary relation symbol $C$.

The definition of the connection width of sets $C$ easily carries over to $\Sigma$-structures. In particular, tuples need to be entirely in $C$ or in some petal $D_i$, and  all constants of the structure need to be included in $C$.
For every $i$, we call the set $I_i\df D_i\cap C$ the \emph{interface} of $D_i$ and the nodes of $D_i-C$  \emph{inner elements} of~$D_i$.

Let  $\calA$ be a $\Sigma$-structure, $C$ a center of connection width $w$ with petals $D_1, \ldots, D_\ell$.
For every $i$, let $\tpl u^i=(u^i_1,\ldots,u^i_w)$ be a tuple of elements from the interface $I_i$ of $D_i$ such that $u^i_1$ is an identifier of its petal $D_i$ and every node from $I_i$ occurs in $\tpl u^i$. By $(\calA_i, \tpl u^i)$ we denote the substructure of $\calA$ induced by  $D_i$  with $u^i_1,\ldots,u^i_w$ as constants but \emph{without} all tuples over~$C$, i.e., $(\calA_i, \tpl u^i)$ only contains tuples with at least one inner element of~$D_i$.

Let $d>0$. The \emph{depth $d$, width $w$ \MSO indicator structure} of $\calA$ relative to $C$ and tuples $\tpl u^i$ is the unique structure $\calB$ which expands  $\restrict{\calA}{C}$ by the following relations:
\begin{itemize}
\item a $w$-ary relation $J$ that contains all tuples $\tpl u^i$, and
\item for every depth-$d$ \MSO-type $\tau$ over $\Sigma\cup\{c_1,\ldots,c_w\}$, a unary relation $R_\tau$ containing those identifier nodes $u^i_1$ for which the depth-$d$ \MSO-type of $(\calA_i,\tpl u^i)$ is $\tau$.  
\end{itemize}

We note that different choices of the tuples $\tpl u^i$ result in different indicator structures and we denote the set of all indicator structures of $\calA$ relative to $C$ by~$\calS(\calA,C,w,d)$.

We are now ready to formulate the desired composition theorem.

\begin{theorem}\label{theorem:center}
  For each $d>0$, every \MSO sentence $\varphi$ with depth $d$, and each $w$, there is a number $d'$ and a \MSO sentence $\psi$ such that for every $\Sigma$-structure $\calA$, every center $C$ of $\calA$ with connection width $w$ and every $\calB\in \calS(\calA,C,w,d')$ it holds
    $\calA \models \varphi$  if and only if $\calB\models \psi$.
\end{theorem}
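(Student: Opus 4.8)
The plan is to exhibit $\calA$ as a \emph{generalized sum} of the center structure and the petals, and then to invoke Shelah's composition theorem (Theorem~\ref{theorem:shelah}) to transfer the evaluation of $\varphi$ from this sum onto the indicator structure $\calB$. First I would set up the decomposition formally. The index of the sum is the center structure $\restrict{\calA}{C}$, and for each petal $D_i$ the corresponding summand is $(\calA_i,\tpl u^i)$, the substructure induced by $D_i$ with the interface tuple $\tpl u^i$ marked by the $w$ constants $c_1,\ldots,c_w$. The structural facts supplied by the connection-width assumption are exactly what is needed: the interiors $D_i\setminus C$ are pairwise disjoint, every tuple of $\calA$ lies entirely in $C$ or carries an inner element of a single $D_i$, and each petal communicates with the rest of $\calA$ only through its interface $I_i=D_i\cap C\subseteq C$, which has size at most $w$. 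Hence $\calA$ is obtained from the index by ``gluing'' each summand along its interface tuple, and the tuples of the center and of the petals partition those of $\calA$, so no tuple is lost or double-counted.

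The heart of the argument is the application of Shelah's theorem: the depth-$d$ \MSO theory of such a sum is \MSO-definable over the index structure once each interface tuple is labelled by the depth-$d'$ \MSO type of the petal glued to it, for a suitable $d'$ depending only on $d$ and $w$. Concretely, the theorem yields an \MSO formula $\hat\varphi$ such that the sum satisfies $\varphi$ if and only if the type-labelled index satisfies $\hat\varphi$. I would then observe that this type-labelled index is precisely what $\calB$ encodes: the relation $J$ marks the interface tuples $\tpl u^i$, and each identifier $u^i_1$ — which by the last connection-width condition belongs to no other petal and hence is a unique handle for $D_i$ — is placed by $R_\tau$ into exactly the class recording the depth-$d'$ type of $(\calA_i,\tpl u^i)$. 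Rewriting the type labels of $\hat\varphi$ in terms of $J$ and the relations $R_\tau$ then produces the desired sentence $\psi$ over the signature of $\calB$, and $\calA\models\varphi$ iff $\calB\models\psi$ follows; the number $d'$ is the one furnished by Shelah's theorem.

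The main obstacle is that, unlike the disjoint generalized sums in the textbook statement of the composition method, here the summands \emph{share} their interface nodes with the index: a global set quantifier $\exists X$ over $\calA$ colours $C$ and every petal simultaneously, and each petal sees the colouring that the index imposes on its (up to $w$) interface nodes. One must check that such a global quantification decomposes correctly into a quantification over the index together with interface-consistent quantifications inside each summand, and that the resulting per-petal behaviour is already determined by the bare depth-$d'$ type of $(\calA_i,\tpl u^i)$ with its interface marked by constants — this is why $d'$ must be taken larger than $d$, and why the types are taken relative to the $w$ interface constants. A related point that makes the composition theorem genuinely necessary, rather than a constant-size case analysis, is that the number $\ell$ of petals is bounded only by $|C|$ and not by a constant; it is precisely the finiteness of the set of depth-$d'$ types over $\Sigma\cup\{c_1,\ldots,c_w\}$ that lets unboundedly many petals be handled by the finitely many labels $R_\tau$.
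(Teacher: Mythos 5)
Your overall architecture matches the paper's: represent $\calA$ via a generalised sum with index $\restrict{\calA}{C}$ and the petals $(\calA_i,\tpl u^i)$ as summands, apply Shelah's composition theorem (Theorem~\ref{theorem:shelah}) to push $\varphi$ down to a type-enriched index, and then read the type labels off $J$ and the relations $R_\tau$ of $\calB$. You also correctly locate the difficulty: Theorem~\ref{theorem:shelah} is stated for \emph{disjoint} sums, whereas your ``sum'' shares the interface nodes between the index and the petals, so a global set quantifier would have to be split into interface-consistent pieces. The problem is that you leave exactly this point as ``one must check that such a global quantification decomposes correctly'' --- but that check is not a consequence of the cited theorem; it is in effect a Feferman--Vaught theorem for \emph{glued} (amalgamated) sums, which is the very thing that needs proving. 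As written, the proposal invokes Shelah's theorem on a structure to which it does not directly apply, and the hardest step of the argument is asserted rather than carried out.

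The missing device, which the paper uses, is to work with a genuinely disjoint sum in which every interface node is \emph{duplicated} --- once inside each petal component (marked there by singleton relations $U_1,\ldots,U_w$) and once as an element of the index, possibly as its own one-element component --- and then to recover $\calA$ from this sum by a \emph{first-order interpretation} that drops the redundant copies and re-identifies the survivors using the relations $J'$ and $U'_i$ of the sum. Since \MSO formulas transfer backwards along FO interpretations (Lemma~\ref{lemma:fointerpreting}), $\varphi$ on $\calA$ becomes $\varphi'$ on the disjoint sum, Shelah's theorem now applies verbatim, and a second FO interpretation (which also handles the signature mismatch between the singleton relations $U_k$ in the components and the constants $c_k$ in the stored types, as well as the non-identifier one-element components) lands you in $\calB$. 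With these two interpretations your outline closes; without them the central step is unsupported.
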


The proof of Theorem \ref{theorem:center} uses Shelah's \emph{generalised sums} \cite{Shelah75}. We follow the exposition from Blumensath et al.~\cite{BlumensathCL08}.
In a nutshell, a generalised sum is a composition of several disjoint \emph{component structures} along an \emph{index structure}.
Shelah's composition theorem states that \MSO sentences on a generalised sum can be translated to \MSO sentences on the index structure enriched by \MSO type information on the components.

We apply the composition theorem on the basis of the following ideas, illustrated in Figure \ref{fig:shelah}.
From the center $C$ and the petals $D_i$ we define an index structure $\calI$ and component structures $\calD_i$, respectively. 
In the generalised sum, these disjoint structures are again composed into a structure that is very similar to $\calA$. More precisely, $\calA$ can be defined in the generalised sum by a first-order interpretation, and thus, thanks to Lemma~\ref{lemma:fointerpreting} below, we can translate the formula $\varphi$ for $\calA$ into a formula $\varphi'$ on the generalised sum.
Shelah's composition theorem then provides a translation of $\varphi'$ into an $\MSO$ formula $\psi'$ on the structure $\calI$ enriched with type information on the structures $\calD_i$. 
This enriched index structure is again very similar to an \MSO indicator structure $\calB$: there is a first-order interpretation that defines the enriched index structure in $\calB$. As a consequence, by  Lemma~\ref{lemma:fointerpreting} again, the formula $\psi'$ can be translated into a formula $\varphi$ on $\calB$.

Before we proceed to the proof of Theorem \ref{theorem:center}, we formally introduce the notions of a generalised sum and a first-order interpretation, and state the corresponding results on translations of \MSO formulas.
We start with first-order interpretations.

\begin{definition}
Let $\Sigma, \Gamma$ be relational signatures. A \emph{first-order interpretation} $\Upsilon$ from $\Sigma$ to $\Gamma$ consists of a first-order formula $\varphi_U(x)$ and first-order formulas $\varphi_R(x_1, \ldots, x_r)$ for each $r$-ary relation symbol $R \in \Gamma$, each over signature $\Sigma$.

The first-order interpretation $\Upsilon$ \emph{interprets}, in a $\Sigma$-structure $\calA$, the $\Gamma$-structure $\Upsilon(\calA)$ with universe $U^{\Upsilon(\calA)} \df \{a \mid \calA \models \varphi_U(a)\}$ and relations \[R^{\Upsilon(\calA)} \df \{(a_1, \ldots, a_r) \mid \calA \models \varphi_R(a_1, \ldots, a_r), a_1, \ldots, a_r \in U^{\Upsilon(\calA)} \}\] for each $R \in \Gamma$.
\end{definition}

A first-order interpretation from $\Sigma$ to $\Gamma$ not only interprets a $\Gamma$-structure in a $\Sigma$-structure, it also translates $\Gamma$-formulas to $\Sigma$-formulas.

\begin{lem}[{see e.g.~\cite[Proposition 3.2]{BlumensathCL08}}] \label{lemma:fointerpreting}
Let $\Upsilon$ be a first-order interpretation from $\Sigma$ to $\Gamma$. For every \FO (\MSO) formula $\varphi(x_1, \ldots, x_\ell)$ over $\Gamma$ there is an \FO (\MSO) formula $\varphi^\Upsilon(x_1,\ldots, x_\ell)$ over $\Sigma$ such that $\calA \models \varphi^\Upsilon(a_1,\ldots, a_\ell) \Leftrightarrow \Upsilon(\calA) \models \varphi(a_1, \ldots, a_\ell)$ for all $\Sigma$-structures $\calA$ and all elements $a_i \in U^{\Upsilon(\calA)}$.
\end{lem}

We now turn to the definition of generalised sums.

\begin{definition}[\cite{Shelah75}, formulation following \cite{BlumensathCL08}]\label{def:generalsum}
Let $\calI = (I,S_1,\ldots, S_r)$ be a structure and $(\calD_i)_{i \in I}$ a sequence of structures $\calD_i = (D_i, R^i_1, \ldots, R^i_t)$ indexed by elements $i$ of $\calI$.
The \emph{generalised sum} of $(\calD_i)_{i \in I}$ is the structure \[ \sum_{i \in I} \calD_i \df (U, \sim, R'_1, \ldots, R'_t, S'_1, \ldots, S'_r)\]
with universe $U \df \{\selem{i,a} \mid i \in I, a \in D_i\}$ and relations
\begin{itemize}
 \item $\selem{i,a} \sim \selem{i',a'}$ if and only if $i = i'$
 \item $R'_j \df \{( \selem{i,a_1},\ldots, \selem{i,a_\ell} ) \mid i \in I, (a_1, \ldots, a_\ell) \in R^i_j\}$
 \item $S'_j \df \{( \selem{i_1,a_1},\ldots,\selem{i_\ell,a_\ell} ) \mid (i_1, \ldots, i_\ell) \in S_j, a_k \in D_{i_k}$ for all $k \in \{1, \ldots, \ell\} \}$
\end{itemize}
\end{definition}

The structures $\calI$ and $\calD_i$ in this definition are also referred to as \emph{index structure} and \emph{component structures}, respectively. 

\begin{theorem}[{\cite{Shelah75}, formulation from \cite[Theorem 3.16]{BlumensathCL08}}] \label{theorem:shelah}
From every \MSO sentence~$\varphi$, a finite sequence $\chi_1, \ldots, \chi_s$ of \MSO formulas and an \MSO formula $\psi$ can be constructed such that
 \[ \sum_{i \in I} \calD_i \models \varphi \: \Leftrightarrow \: (\calI, \llbracket \chi_1 \rrbracket, \ldots, \llbracket \chi_s \rrbracket) \models \psi\]
for all index structures $\calI$ and component structures $\calD_i$, where $\llbracket \chi \rrbracket \df \{i \in I \mid \calD_i \models \chi \}$.  
\end{theorem}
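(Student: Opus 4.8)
The plan is to prove Theorem~\ref{theorem:shelah} by induction on the structure of $\varphi$. As is typical for composition theorems, the statement for sentences carries too little information through the quantifier cases, so I would first strengthen it to \MSO formulas with free variables. \emph{Strengthened statement:} for every \MSO formula $\varphi(x_1,\ldots,x_k,X_1,\ldots,X_m)$ over the signature of the generalised sum one constructs a finite list $\chi_1,\ldots,\chi_s$ of \MSO formulas over the component signature — enriched by fresh unary predicates $Y_1,\ldots,Y_m$ and fresh constants — together with an \MSO formula $\psi$ over the index signature enriched by unary predicates $P_1,\ldots,P_s$ and carrying an index-level element variable $z_j$ for each $x_j$, such that for all $\calI$, all $(\calD_i)_{i\in I}$ and every valuation with $x_j\mapsto\selem{i_j,a_j}$ and $X_j\mapsto U_j$ one has $\sum_{i\in I}\calD_i\models\varphi$ iff the enriched index structure models $\psi$ under the induced valuation. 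The induced valuation sends $z_j$ to $i_j$ and interprets $P_t$ as the set of indices $i$ for which $\calD_i$, expanded so that each $Y_j$ denotes $U_j\cap D_i$ and the constants denote those $a_{j'}$ with $i_{j'}=i$, satisfies $\chi_t$. The crucial design choice is that all \emph{local content} of the set variables $X_j$ is absorbed into the component types recorded by $P_1,\ldots,P_s$, while each element variable only contributes a distinguished element to its host component and is tracked on the index side by $z_j$.

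For the \emph{atomic and Boolean cases} the translation is immediate from Definition~\ref{def:generalsum}: the relation $\sim$ becomes equality of host indices; an index relation $S'_j$ becomes $S_j$ applied to the host indices; a component relation $R'_j$ becomes ``all arguments share one host index $i$ and $\calD_i$ realises the corresponding local type'', i.e.\ a conjunction of an index-level equality with a type predicate $P_t$; membership $x\in X_j$ is a purely local condition on $\calD_{i_j}$ and so folds into the type of that component. Conjunction, disjunction and negation are handled by closing the finite list of component formulas under Boolean combinations and combining the index-level formulas accordingly. For the \emph{first-order quantifier} $\exists x\,\varphi$ one chooses a host index (an index-level $\exists z$) and an element inside it; the choice of that element only refines the type of one component, so it is absorbed by adding $\exists x\,\chi_t$ to the component list and adjusting $\psi$.

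The \emph{main obstacle} is the \emph{second-order quantifier} $\exists X\,\varphi$, since a global subset of the sum distributes arbitrarily across all components at once. The key observation is that choosing such a set $U$ is the same as choosing, independently for every index $i$, a local set $U\cap D_i\subseteq D_i$, and that the only information about $U\cap D_i$ that $\varphi$ can see is the $Y$-refined type of $\calD_i$. Writing $\chi_1^X,\ldots,\chi_{s'}^X$ for the component formulas produced by the induction hypothesis applied to $\varphi$ with $X$ added as a free set variable, choosing $U$ thus amounts to choosing, for every index $i$, one of these refined types that is \emph{realisable} in $\calD_i$ — and realisability of $\chi_t^X$ is itself the component property $\exists X\,\chi_t^X$, hence just another formula to append to our list. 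A choice of one refined type per index is a partition of $I$ into $s'$ classes, expressible by $s'$ monadic variables on the index structure, so
\[ \sum_{i\in I}\calD_i\models\exists X\,\varphi \;\iff\; (\calI,\ldots)\models \exists Z_1\cdots\exists Z_{s'}\Bigl(\textstyle\bigwedge_t \forall i\,(Z_t(i)\to P_t^{\exists}(i))\wedge \psi'[Z_1,\ldots,Z_{s'}]\Bigr), \]
where $P_t^{\exists}=\llbracket\exists X\,\chi_t^X\rrbracket$ marks realisability and $\psi'$ is the index-level formula from the induction hypothesis, now reading the chosen type of index $i$ off the partition $Z_1,\ldots,Z_{s'}$ instead of off fixed predicates. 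The right-hand side is again \MSO over the enriched index structure, so the induction closes; the universal set quantifier is dual. Two bookkeeping facts keep everything finite and effective: for each fixed quantifier rank there are only finitely many \MSO types over the fixed finite enriched component signature, so the lists $\chi_1,\ldots,\chi_s$ stay finite, and the realisability predicates, the host-index tracking, and the partition encoding are all first-order or monadic over the index, so $\psi$ remains \MSO. Specialising the strengthened statement to a sentence $\varphi$ makes the $\chi_t$ into component sentences, turns $P_t$ into the fixed sets $\llbracket\chi_t\rrbracket$, and yields the desired index-level sentence $\psi$.
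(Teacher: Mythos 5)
The paper itself does not prove Theorem~\ref{theorem:shelah}: it imports the result from \cite{Shelah75} in the formulation of \cite[Theorem 3.16]{BlumensathCL08} and uses it as a black box inside the proof of Theorem~\ref{theorem:center}. So your proposal can only be measured against the standard proof of Shelah's composition theorem, and your overall architecture is indeed that standard one: induction on formulas with free first- and second-order variables, absorption of all local content into component type predicates, and simulation of a set quantifier over the sum by monadic quantification over the index structure.

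However, the second-order quantifier case --- which you rightly call the main obstacle --- has a genuine gap as written. Your guard $\bigwedge_t \forall i\,(Z_t(i)\to P_t^{\exists}(i))$ with $P_t^{\exists}=\llbracket\exists X\,\chi_t^X\rrbracket$ only demands that each component formula be \emph{individually} realisable in $\calD_i$, possibly by a different witness set for each $t$, and it imposes nothing at indices $i\notin Z_t$. The backward direction of your equivalence needs, for each $i$, a \emph{single} local set $A_i$ realising exactly the prescribed truth values of \emph{all} the $\chi_t^X$ simultaneously, negative information included. Concretely: if $\chi_1^X$ says ``$X$ is empty'' and $\chi_2^X$ says ``$X$ is nonempty'', both are individually realisable in every component, so $Z_1=Z_2=I$ passes your guard; yet no choice of local sets makes both predicates hold at any index, so if $\psi'$ reads both positively, your right-hand side is true while the left-hand side is false. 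The displayed equivalence is valid only when the $\chi_t^X$ are mutually exclusive and exhaustive complete type descriptions (Hintikka formulas of the relevant quantifier rank) and the $Z_t$ are additionally constrained to partition $I$. Your prose gestures at both conditions (``one refined type per index'', ``partition''), but your strengthened induction statement never builds in this invariant, so the induction does not close as stated. The repair is standard and cheap: either maintain as part of the induction hypothesis that the component formulas are exactly the finitely many complete types over the enriched component signature, or replace the $s'$ classes by $2^{s'}$ classes indexed by subsets $T\subseteq\{1,\ldots,s'\}$, guarded by the \emph{joint} realisability predicates $\llbracket\exists X(\bigwedge_{t\in T}\chi_t^X\wedge\bigwedge_{t\notin T}\neg\chi_t^X)\rrbracket$, and require these classes to cover $I$ disjointly. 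The same complete-type discipline is also needed to make your predicates $P_t$ well defined in the presence of free first-order variables, whose constants are interpreted only in the hosting components.
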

Intuitively, the formulas $\chi_i$ encode the \MSO type information on the component structures of the generalized sum.

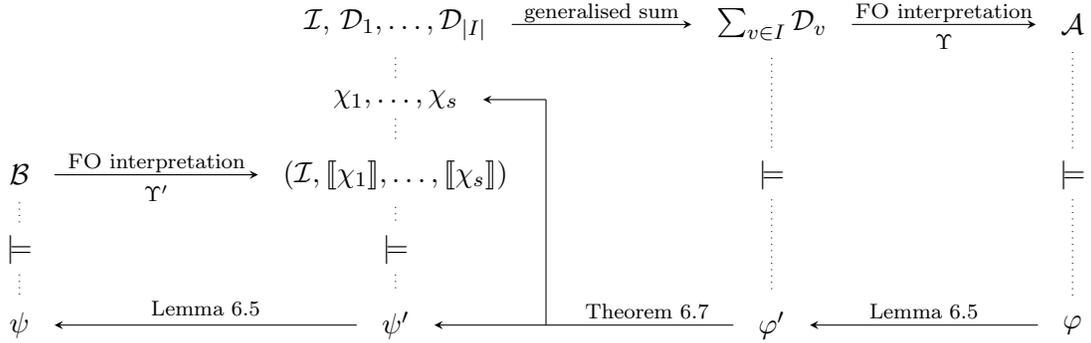
\begin{figure}[t]
\begin{tikzpicture}
 \node (1) at (4,0) {$\calI$, $\calD_1, \ldots, \calD_{|I|}$};
 \node (2) at (9,0) {$\sum_{v \in I} \calD_v$};
 \node (3) at (13,0) {$\calA$};
 \node (4) at (-1,-2) {$\calB$};
 \node (5) at (4,-2) {$(\calI, \llbracket \chi_1 \rrbracket, \ldots, \llbracket \chi_s \rrbracket)$};
 \node (9) at (13,-4) {$\varphi$};
 \node (8) at (9,-4) {$\varphi'$};
 \node (7) at (4,-4) {$\psi'$};
 \coordinate (x) at (6,-4);
 \coordinate (y) at (6,-1);
 \node (7b) at (4,-1) {$\chi_1, \ldots, \chi_s$};
 \node (6) at (-1,-4) {$\psi$};
 \node (m1) at (-1,-3) {$\models$};
 \node (m2) at (4,-3) {$\models$};
 \node (m3) at (9,-2) {$\models$};
 \node (m4) at (13,-2) {$\models$};
 \draw[shorten <=0.1cm] (3) edge[dotted] (m4);
 \draw[shorten >=0.1cm] (m4) edge[dotted] (9);
 \draw[shorten <=0.1cm] (2) edge[dotted] (m3);
 \draw[shorten >=0.1cm] (m3) edge[dotted] (8);
 \draw[shorten <=0.1cm] (4) edge[dotted] (m1);
 \draw[shorten >=0.1cm] (m2) edge[dotted] (7);
 \draw[shorten <=0.1cm] (5) edge[dotted] (m2);
 \draw[shorten >=0.1cm] (m1) edge[dotted] (6);
 \draw[shorten <=0.1cm] (1) edge[dotted] (7b);
 \draw[shorten >=0.1cm] (7b) edge[dotted] (5);
 \draw[shorten >=0.2cm, shorten <=0.2cm] (1) edge[-stealth] node[above=-0.1cm] {\scriptsize generalised sum}(2);
 \draw[shorten >=0.2cm, shorten <=0.2cm] (2) edge[-stealth] node[above=-0.1cm] {\scriptsize \FO interpretation} node[below=-0.0cm] {\scriptsize $\Upsilon$}(3);
 \draw[shorten >=0.2cm, shorten <=0.2cm] (4) edge[-stealth] node[above=-0.1cm] {\scriptsize \FO interpretation} node[below=-0.0cm] {\scriptsize $\Upsilon'$}(5);
 \draw[shorten >=0.2cm, shorten <=0.2cm] (9) edge[-stealth] node[above=-0.05cm] {\scriptsize Lemma \ref{lemma:fointerpreting}}(8);
 \draw[shorten >=0cm, shorten <=0.2cm] (8) edge node[above=-0.05cm] {\scriptsize Theorem \ref{theorem:shelah}} (x);
 \draw[shorten >=0.2cm, shorten <=0cm] (x) edge[-stealth] (7);
 \draw[] (x) edge[] (y);
 \draw[shorten >=0.2cm, shorten <=0cm] (y) edge[-stealth] (7b);
 \draw[shorten >=0.2cm, shorten <=0.2cm] (7) edge[-stealth] node[above=-0.05] {\scriptsize Lemma \ref{lemma:fointerpreting}} (6); 
\end{tikzpicture}
\caption{Overview of the proof strategy of Theorem \ref{theorem:center}.}\label{fig:shelah}
\end{figure}

With the necessary notions in place, we can now prove Theorem \ref{theorem:center}.

\begin{proofof}{Theorem \ref{theorem:center}}
Suppose that $\calA$ is a $\Sigma$-structure with center $C$ of connection width~$w$, tuples $\tpl u^i$ collecting the interface nodes for each petal $D_i$ as described above, and let $\varphi$ be an \MSO formula over signature $\Sigma$. The proof is in three steps, depicted in Figure \ref{fig:shelah}:
\begin{enumerate}[(A)]
  \item We present an index structure $\calI$ and component structures $\calD_i$, and show that there is an FO-interpretation $\Upsilon$ that interprets the structure $\calA$ in the generalised sum $\sum_{i \in I} \calD_i$.  Thus there is an \MSO formula $\varphi'$ such that $\calA \models \varphi$ if and only if $\sum_{i \in I} \calD_i \models \varphi'$ by Lemma~\ref{lemma:fointerpreting}.  
  \item  From Theorem \ref{theorem:shelah} we obtain formulas $\psi'$ and $\chi_1, \ldots, \chi_s$ such that $\sum_{i \in I} \calD_i \models \varphi'$ if and only if $(\calI, \llbracket \chi_1 \rrbracket, \ldots, \llbracket \chi_s \rrbracket) \models \psi'$.
  \item Then we show that there is an FO-interpretation $\Upsilon'$ that interprets $(\calI, \llbracket \chi_1 \rrbracket, \ldots, \llbracket \chi_s \rrbracket)$ in each \MSO indicator structure $\calB \in \calS(\calA, C, w, d')$ with appropriate $d'$. Thus there is an MSO formula $\psi$ that satisfies $\calB \models \psi$ if and only if $(\calI, \llbracket \chi_1 \rrbracket, \ldots, \llbracket \chi_s \rrbracket) \models \psi'$ by Lemma~\ref{lemma:fointerpreting}.
\end{enumerate}
Combining these three steps allows us to conclude 

\[\calA \models \varphi \quad\stackrel{(A)}{\Longleftrightarrow}\quad \sum_{i \in I} \calD_i \models \varphi'  \quad\stackrel{(B)}{\Longleftrightarrow}\quad (\calI, \llbracket \chi_1 \rrbracket, \ldots, \llbracket \chi_s \rrbracket) \models \psi' \quad\stackrel{(C)}{\Longleftrightarrow}\quad \calB \models \psi\]
for any  \MSO indicator structure $\calB \in \calS(\calA, C, w, d')$.

It remains to prove (A) and (C), since (B) is a direct application of  Theorem \ref{theorem:shelah}. 

Towards proving (A) we construct structures $\calI$ and $\calD_i$ which are closely related to the substructure $\restrict{\calA}{C}$ of $\calA$ and the structures $(\calA_i, \tpl u^i)$, respectively. Let $R_1,\ldots,R_q$ be the relation symbols of $\Sigma$ and let  $\Gamma=\{S_1,\ldots,S_q\}$.
The structure $\calI$ has universe $C$, $\Gamma$-relations $S_ i$ defined as the restriction of the respective $\Sigma$-relation $R_i$ of $\calA$ to $C$, and the relation $J$ as described above. Each structure $\calD_v$ for $v \in C$ is over signature $\Sigma \cup \{U_1 \ldots, U_w \}$ and defined as follows. If $v$ is an identifier $u^i_1$ of a petal, then $\calD_v = (\calA_i, \{u^i_1\}, \ldots, \{u^i_w\})$, that is, the restriction of $\calA$ to the elements from $D_i$, without any tuples consisting only of elements from $C  \cap D_i$, and with additional unary, singleton relations $U_1, \ldots, U_w$ such that $U_j$ includes only the $j$-th interface node $u^i_j$. If $v$ is no identifier node then $\calD_v$ is the structure with universe $\{v\}$ and empty relations.

In the generalised sum $\calS \df \sum_{v \in I} \calD_v= (U, \sim, R'_1, \ldots, R'_q, U'_1,\ldots U'_w,S'_1, \ldots, S'_q,J')$, the universe $U$ consists of elements of the form $\selem{u^i_1,w}$, where $w\in D_i$, and of the form $\selem{u,u}$ where $u\in C$ is not an identifier of any petal. We emphasise that the formulas of the interpretation that defines (a copy of) $\calA$ in $\calS$ can \emph{not} access the components $u$ and $v$ of an element  $\selem{u,v}\in U$. 
However, $U$ can be partitioned into  four kinds of elements, each of which can easily be distinguished from the others in a first-order fashion:
\begin{enumerate}[(i)]
\item Elements of the form $\selem{v,v}$, for which $D_v$ is not a petal. They can be identified since they constitute an equivalence class of size 1 with respect to $\sim$; 
\item Elements of the form $\selem{v,v}$, for which $D_v$ is a petal. These are precisely the elements in $U'_1$;
\item  Elements of the form $\selem{v,u}$, where $u\in C$ is in the petal $D_v$. These elements occur in some $U'_i$, for $i>1$ (but not in $U'_1$);
\item Elements of the form $\selem{v,u}$, where $u\not\in C$ is in the petal $D_v$. They do not occur in any $U'_i$ and are not of type (i). 
\end{enumerate}
Elements of type (iv) are in one-to-one correspondence with the  inner elements of petals $D_i$. Elements from $C$ might have several copies in $U$, but only one of the types (i) or~(ii). Thus, the formula that defines the universe for the first-order interpretation $\Upsilon$ of $\calA$ in $\sum_{i \in I} \calD_i$ simply drops all elements of type (iii). Tuples of $\calA$ of a relation $R_i$ that  entirely consist of nodes from $C$ (that is, elements of type (i) or (ii) in $\calS$) are directly induced by the corresponding relation $S'_i$.

In order to define tuples with at least one node of type (iv) in $\calA$, we first observe that it can be expressed in a first order fashion, whether for a type (iii) element $\selem{v_1,u_1}$ and a type (i) element $\selem{v_2,v_2}$ it holds $u_1=v_2$, i.e., that, intuitively, $\selem{v_2,v_2}$ is the copy representing $u_1$ in $\calS$ that survives in the universe of the interpretation.
We claim that this condition holds, if and only if $\selem{v_2,v_2}$ occurs as the $i$-th entry in some tuple of $J'$ with  first entry $\selem{v_1,u_1}$, where $i$ is the unique number such that $\selem{v_1,u_1} \in U'_i$. From this claim, first-order expressibility follows instantly.
The ``only if''-part of the claim is straightforward. For the ``if''-part, it follows  from the latter condition that there is a tuple with first entry $v_1$ and $i$-th entry $v_2$ in $J$, by the definition of $J'$. Since $\selem{v_1,u_1} \in U'_i$, there is also a tuple in $J$ with $v_1$ as first entry and $u_1$ as $i$-th entry. However, since $J$ has at most one tuple with any given value as first entry, $u_1=v_2$ follows, as claimed.

A tuple with some element $\selem{v,u}$ of type (iv) is now in a relation $R_i$ of the interpretation, if it can be transformed into a tuple of $R'_i$ by replacing some elements $\selem{w,w}$ of type (i)  with $\selem{v,w}$.

It follows from the construction that $\Upsilon(\calS)$ is isomorphic to $\calA$ and therefore $\calA \models \varphi$ if and only if $\Upsilon(\calS) \models \varphi$. We obtain the formulas $\varphi'$, $\chi_1, \ldots, \chi_s$ and $\psi'$ as explained above. 
Let $d'$ be the maximal quantifier depth of any formula $\chi_j$.  This concludes step (A) of the proof.

Towards proving (C), recall that we need to show that there is a first-order interpretation $\Upsilon'$ which interprets $(\calI, \llbracket \chi_1 \rrbracket, \ldots, \llbracket \chi_s \rrbracket)$ in $\calB$, for any $\calB \in \calS(\calA, C, w, d')$. Let $\calB$ be such a structure. The universe of $\calI$ is $C$, that is, the same as the universe of $\calB$. Thus the formula of $\Upsilon'$ that defines the universe is trivial.

For the definitions of the relations $\llbracket \chi_j \rrbracket$, the idea is as follows.
If $v$ is \emph{not} an identifier $u^i_1$ of a petal, then $v \in  \llbracket \chi_j \rrbracket$ if and only if $\chi_j$ holds in the structure consisting of only one element and with empty relations, which can be hard-coded in the defining formula.
Otherwise, if $v = u^i_1$ for some $i$, we need to determine whether $\chi_j$ holds in the structure $\calD_v = (\calA_i, \{u^i_1\}, \ldots, \{u^i_w\})$, which is a structure over signature $\Sigma \cup \{U_1, \ldots, U_w\}$.
The structure $\calB$ contains information about the \MSO types of the structures $(\calA_i, \tpl u^i)$, but $(\calA_i, \tpl u^i)$ is a structure over signature $\Sigma \cup \{c_1, \ldots, c_w\}$.
Yet it is easy to see that for the formula $\chi'_j$ that is obtained from $\chi_j$ by replacing every atom $U_k(x)$ by $x = c_k$ it holds $(\calA_i, \tpl u^i) \models \chi'_j$ if and only if $\calD_v \models \chi_j$.
So, in this case $v \in  \llbracket \chi_j \rrbracket$ if and only if $v \in R_\tau^\calB$ for a depth-$d'$ \MSO type $\tau$ with $\chi'_j \in \tau$.
All these conditions can even be expressed by quantifier-free first-order formulas for fixed formulas $\chi_j$. 
As a result, by Lemma \ref{lemma:fointerpreting} we obtain from $\Upsilon'$ and $\psi'$ a formula $\psi$ with $\calB \models \psi \Leftrightarrow \calA \models \varphi$.
\end{proofof}

\subsection{The dynamic program}\label{subsction:msorest}

We proceed to show that every \MSO-definable property can be maintained in $\DynFO$, and thus prove Theorem \ref{theorem:mso}. Thanks to Theorem~\ref{theorem:fewerChanges} and Proposition~\ref{prop:msoadi} it suffices to show  that $\problem{MC}_\varphi$ is $(\AC^1,\log n)$-maintainable for graphs $G$ with treewidth at most $k$.  

The idea for our dynamic program is similar to the idea for maintaining 3-colourability: during its initialisation the program constructs a tree decomposition and appropriate \MSO types for all triangles (instead of partial colourings as in the proof of Theorem \ref{theo:3col}). During the change sequence, a set $C$ of nodes is defined that contains, for each affected graph node $v$, all nodes of at least one special bag containing $v$. The set $C$ has connection width $w$ for some constant $w$ and the dynamic program basically maintains an \MSO indicator structure $\calB$ for $G$ relative to~$C$. As there are only $\log n$ many change steps, the size of $C$ is bounded by $\bigO(\log n)$.

By Theorem \ref{theorem:center} there is an \MSO formula $\psi$ with the property that $G \models \varphi$ if and only if $\calB \models \psi$.
Although the dynamic program maintains $\calB$, it cannot directly evaluate $\psi$, as it is restricted to use first-order formulas. 
For this reason we first show that second-order quantification over sets of size $\bigO(\log n)$ can be simulated in first-order logic, if a particular relation is present. Afterwards we present the details of the dynamic program.

We call an \MSO-formula \emph{$C$-restricted}, if all its quantified subformulas are of one of the following forms.
\begin{itemize}
\item $\exists x\; (C(x) \land \varphi)$ or $\forall x\; (C(x) \loif \varphi)$,
\item $\exists X\; (\forall x (X(x) \loif C(x)) \land \varphi)$ or $\forall X\; (\forall x (X(x) \loif C(x)) \loif \varphi)$.
\end{itemize}

Let $\calA$ be a structure with a unary relation $C$ and a $(k+1)$-ary relation $\text{Sub}$, for some $k$.  
We say that $\text{Sub}$ \emph{encodes subsets of $C$} if, for each subset $C'\subseteq C$, there is a $k$-tuple $\tpl t$ such that, for every element $c\in C$ it holds $c\in C'$ if and only if $(\tpl t,c)\in \text{Sub}$. Clearly, such an encoding of subsets only exists if $|V|^k\ge 2^{|C|}$ and thus if $|C|\le k\log |V|$.

\begin{proposition}\label{prop:log}
  For each $C$-restricted \MSO-sentence $\psi$ over a signature $\Sigma$ (containing $C$) and every $k$ there is a first-order sentence $\chi$ over $\Sigma \cup \{S\}$ where $S$ is a $(k+1)$-ary relation symbol such that, for every $\Sigma$-structure $\calA$ and $(k+1)$-ary relation $\text{Sub}$ that encodes subsets of $C$ (of $\calA$), it holds 
    $\calA \models \psi \text{ if and only if } (\calA, \text{Sub}) \models \chi$.
\end{proposition}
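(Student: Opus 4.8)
The plan is to obtain $\chi$ from $\psi$ by a purely syntactic translation that replaces each second-order quantifier---which, by $C$-restriction, ranges only over subsets of $C$---by a block of $k$ first-order quantifiers guessing a \emph{code} $\tpl t$ of such a subset, and then uses $S$ to simulate membership. Concretely, I would associate with every set variable $X$ a fresh $k$-tuple $\tpl t_X$ of first-order variables and define a translation $\mathrm{tr}$ by induction on the formula: $\mathrm{tr}$ is the identity on $\Sigma$-atoms and on equalities, commutes with the Boolean connectives, leaves the ($C$-relativized) first-order quantifiers untouched, replaces each set atom $X(y)$ by $S(\tpl t_X, y)$, and turns $\exists X(\forall x(X(x)\loif C(x))\land\varphi)$ and $\forall X(\forall x(X(x)\loif C(x))\loif\varphi)$ into $\exists \tpl t_X\,\mathrm{tr}(\varphi)$ and $\forall \tpl t_X\,\mathrm{tr}(\varphi)$, respectively. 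The resulting $\chi \df \mathrm{tr}(\psi)$ is then a first-order sentence over $\Sigma\cup\{S\}$.

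For correctness, the key observation is that whenever a $(k{+}1)$-ary relation $\text{Sub}$ encodes subsets of $C$, every $k$-tuple $\tpl t$ induces the subset $C_{\tpl t}\df\{c\in C\mid (\tpl t,c)\in\text{Sub}\}$, and the map $\tpl t\mapsto C_{\tpl t}$ is \emph{surjective} onto the power set of $C$ by the very definition of ``encodes subsets of $C$''. I would then prove by induction on the subformulas $\theta$ of $\psi$ the following statement: for every assignment of the free first-order variables of $\theta$ to elements of $C$ and of every free set variable $X$ to the subset $C_{\tpl t_X}$ coded by its associated tuple, $\calA\models\theta$ holds if and only if $(\calA,\text{Sub})\models\mathrm{tr}(\theta)$ holds under the corresponding first-order assignment. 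The set-atom case $X(y)$ is exactly the definition of $C_{\tpl t_X}$ and is only ever evaluated for $y\in C$, since every first-order quantifier is relativized to $C$; the Boolean and first-order-quantifier cases are immediate; and the two second-order cases use surjectivity. For an existential set quantifier, ranging over all codes $\tpl t_X$ reaches every subset of $C$; for a universal one, I additionally use that \emph{every} code $\tpl t_X$ denotes a genuine subset $C_{\tpl t_X}\subseteq C$, so quantifying over codes and over subsets coincide in both directions. Instantiating this claim at $\theta=\psi$, which is a sentence, yields $\calA\models\psi$ if and only if $(\calA,\text{Sub})\models\chi$, as required.

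I do not expect a genuine obstacle: the statement is essentially the folklore fact that second-order quantification over sets of size $\bigO(\log n)$ can be simulated in first-order logic once an addressing scheme is available. The only two points that require a little care are (i) that membership is simulated faithfully precisely because set atoms are tested only against elements of $C$, on which $\text{Sub}$ is guaranteed to be correct (its behaviour outside $C$ is irrelevant and may be arbitrary), and (ii) that the universal second-order case remains sound despite the coding being neither injective nor free of ``garbage'' codes---which is handled by noting that every code still denotes some subset of $C$, while surjectivity guarantees that no subset is missed.
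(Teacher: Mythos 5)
Your proposal is correct and matches the paper's proof, which performs exactly the same syntactic translation (each set quantifier becomes a block of $k$ first-order variables, each atom $X(y)$ becomes $\text{Sub}(\tpl x,y)$) and leaves the correctness argument as "straightforward". The extra care you take with surjectivity of the coding map, the harmlessness of non-injective or garbage codes in the universal case, and the fact that set atoms are only evaluated on $C$ is a faithful elaboration of what the paper omits.
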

\begin{proof}%
  The proof is straightforward. Formulas $\exists X\; (\forall x (X(x) \loif C(x)) \land \varphi)$ are translated into formulas $\exists \tpl x\; \varphi'$, where $\tpl x$ is a tuple of $k$ variables and $\varphi'$ results from $\varphi$ by simply replacing every atomic formula $X(y)$ by $\text{Sub}(\tpl x,y)$. Universal set quantification is translated analogously. 
\end{proof}

\begin{proofof}{Theorem \ref{theorem:mso}}
Thanks to Theorem~\ref{theorem:fewerChanges} and Proposition~\ref{prop:msoadi}  it suffices to show that $\problem{MC}_\varphi$ is $(\AC^1,\log n)$-maintainable in \DynFO for graphs with treewidth at most $k$. Let $d$ be the quantifier depth of $\varphi$ and let $d'$ and $\psi$ be the number and the \MSO sentence guaranteed to exist by Theorem \ref{theorem:center}. 

Given a graph $G = (V,E)$, the $\AC^1$ initialisation first ensures that relations $\leq, +, \times$ and $\BIT$ are available.
Then it computes a $d_{\text{tree}}$-nice tree decomposition $(T,B)$ with $T = (I,F,r)$ with bags of size at most $\ell \df 4k+6$, for the constant $d_{\text{tree}}$ guaranteed to exist by Lemma~\ref{prelims:decompositions:nice}, together with the predecessor order $\preceq$ on $I$. 
With each node $i \in I$, we associate a tuple $\tpl v(i)=(v_1,\ldots,v_m,v_1,\ldots,v_1)$ of length $\ell$, where $B(i)=\{v_1,\ldots,v_m\}$ and $v_1<\cdots<v_m$. That is, if the bag size of $i$ is $\ell$, this tuple just contains all graph nodes of the bag in increasing order. If the bag size is smaller, the smallest graph node is repeated. 
Similarly, with each triangle $\delta = (i_0,i_1,i_2)$ such that the subgraph $G(\delta)$ has at least one inner node, we associate a tuple $\tpl v(\delta) = (v(\delta),\tpl v(i_0),\tpl v(i_1),\tpl v(i_2))$, where $v(\delta)$ denotes the smallest inner node of $G(\delta)$ with respect to $\leq$.

The dynamic program further uses auxiliary relations $S, C, N$, and $D_\tau$, for each depth-$d'$ \MSO type $\tau$ over the signature that consists of the binary relation symbol $E$ and $3 \ell + 1$ constant symbols $c_1, \ldots, c_{3 \ell +1}$. 
The intended meaning is that $C$ is a center of $G$ with connection width $3 \ell +1$ and that from these relations an \MSO indicator structure $\calB$ relative to $C$ can be defined in first-order. 

The relation $S$ stores tuples $\tpl v(i)$ representing special bags, as in the proof of Theorem~\ref{theo:3col}.  The relations $D_\tau$ provide \MSO type information for all triangles. More precisely, for each triangle $\delta = (i_0, i_1, i_2)$ for which the subgraph $G(\delta)$ has at least one inner node, $D_\tau$ contains the tuple $\tpl v(\delta)$ if  and only if the \MSO depth-$d'$ type of $(G(\delta),\tpl v(\delta))$ is $\tau$.

The set $C$ always contains all graph nodes that occur in special bags (and thus in $S$), plus one inner node $v(\delta)$, for each maximal clean triangle\footnote{For such a triangle $\delta=(i_0,i_1,i_2)$, the nodes $i_0, i_1, i_2$ are exactly the special nodes in $T(\delta)$.} $\delta$ with at least one inner node. 
The relation $N$ defines a bijection between $C$ and an initial segment of~$\leq$. 

We observe that $C$ is a center of $G$ and that the petals induced by $C$ correspond to the maximal clean triangles, with respect to the special nodes stored in $S$, with at least two inner nodes. 
The interface $I(\delta)$ of a petal corresponding to a maximal clean triangle $\delta = (i_0, i_1, i_2)$ contains the nodes from $B(i_0), B(i_1),$ and $B(i_2)$ as well as the node $v(\delta)$, so $C$ has connection width $w\df 3\ell+1$.
Figure \ref{figure:decomposition-and-triangles} gives  an illustration.

  \begin{figure}[t] 
    \centering
\begin{tikzpicture}[->,>=stealth',level/.style={sibling distance = 6cm/#1,
  level distance = 1.5cm},
  special/.style = {circle, draw, fill=blue!90},
  normal/.style = {circle, draw, fill=white}] 
\node [special] (r) {}
    child{ node [special] (n0) {} 
            child{ node [normal] (n00) {} 
            		child{ node [normal] (n000) {}} 
				child{ node [special] (n001) {}}
            }
            child{ node [special] (n01) {}
				child{ node [normal] (n010) {}}
				child{ node [normal] (n011) {}}
            }                            
    }
    child{ node [normal] (n1) {}
            child{ node [normal] (n10) {} 
				child{ node [normal] (n100) {}}
				child{ node [normal] (n101) {}}
            }
            child{ node [special] (n11) {}
				child{ node [normal] (n110) {}}
				child{ node [normal] (n111) {}}
            }
		}
; 
\begin{pgfonlayer}{background}
\filldraw[fill = blue!30, opacity=0.4] ([yshift = 0.4cm, xshift=0 cm]r.center) -- ([yshift = 0cm, xshift=-0.4 cm]n0.center) -- ([yshift = 0.45cm, xshift=0.35 cm] n01.center) -- ([yshift = -0.3cm, xshift=-0.2 cm]n100.center) -- ([yshift = -0.3cm, xshift=0.2 cm]n101.center) -- ++(55:1.3cm) -- ([yshift = -0cm, xshift=0.4 cm] n11.center) -- ([yshift = 0.3cm, xshift=0.3 cm] n1.center) -- cycle; 

\filldraw[fill = yellow!50, opacity=0.4] ([yshift=0.4cm]n11.center) -- ([yshift = -0.3cm, xshift=-0.5 cm]n110.center) -- ([yshift = -0.3cm, xshift=0.5 cm]n111.center) -- cycle; 

\filldraw[fill = brown!50, opacity=0.4] ([yshift=0.4cm]n01.center) -- ([yshift = -0.3cm, xshift=-0.5 cm]n010.center) -- ([yshift = -0.3cm, xshift=0.5 cm]n011.center) -- cycle; 

\filldraw[fill = green!30, opacity=0.4] ([yshift=0.4cm]n0.center) -- ([yshift = 0.3cm, xshift=-0.3 cm]n00.center) -- ([yshift = -0.3cm, xshift=-0.5 cm]n000.center) -- ([yshift = -0.3cm, xshift=0.2 cm]n001.center) -- ++(55:1.3cm) -- ([yshift = 0cm, xshift=0.4 cm]n01.center) -- cycle; 
\end{pgfonlayer}
      \end{tikzpicture}
    \caption{Tree of a tree decomposition. Tree nodes representing special bags are coloured blue, the corresponding maximal clean triangles are indicated as coloured areas.
    The union of all special bags are a center of the graph, with the graphs induced by the maximal clean triangles as petals.}\label{figure:decomposition-and-triangles}
  \end{figure}

Now, an indicator structure $\calB\in\calS(G, C, w, d')$  can be first-order defined as follows. 
Clearly, maximal clean triangles can be easily first-order defined from the relation $S$. For each maximal clean triangle $\delta = (i_0, i_1, i_2)$ with at least two inner nodes, the relation $J$ contains the tuple $\tpl v(\delta)$, and the relation $R_\tau$ contains $v(\delta)$ if and only if is $\tpl v(\delta) \in D_\tau$.
We translate the \MSO formula $\psi$ to a $C$-restricted \MSO formula $\chi'$ such that $\calB \models \psi \Leftrightarrow (G,\aux) \models \chi'$, where $\aux$ is the auxiliary database stored by the dynamic program.
This translation is basically as described by Lemma~\ref{lemma:fointerpreting}. 
The formula $\chi'$ results from $\psi$ by 
\begin{itemize}
 \item $C$-restricting every quantified subformula, so, for example, replacing every quantified subformula $\exists X \; \theta$ by $\exists X\; (\forall x (X(x) \loif C(x)) \land \theta)$ and every quantified subformula $\forall X \; \theta$ by $\forall X\; (\forall x (X(x) \loif C(x)) \loif \theta)$, and
 \item replacing every atom $A(\tpl x)$ by $\theta_A(\tpl x)$, where $\theta_A$ is the first-order formula that defines $A$ in $(G, \aux)$. 
\end{itemize}
It clearly holds that $(G,\aux) \models \chi' \Leftrightarrow \calB \models \psi$, and by Theorem \ref{theorem:center} also $(G,\aux) \models \chi' \Leftrightarrow G \models \varphi$.

We now define a relation $\text{Sub}$ that encodes subsets of $C$.
We observe that $C$ is of size at most $b \log n$ for some $b \in \N$. Thus a subset $C'$ of $C$ can be represented by a tuple $(a_1, \ldots, a_b)$ of nodes, where an element $c \in C$ is in $C'$ if and only if $c$ is the $m$-th element of $C$ with respect to the mapping defined by $N$, $m = (\ell -1) \log n + j$ and the $j$-th bit of $a_\ell$ is one.
By Proposition \ref{prop:log} we finally obtain a first-order formula $\chi$ such that $(G,\aux,\text{Sub}) \models \chi \Leftrightarrow (G,\aux) \models \chi' \Leftrightarrow G \models \varphi$.
That means that a dynamic program that maintains the auxiliary relations as intended can maintain the query $\problem{MC}_\varphi$.

It thus remains to describe how the auxiliary relations can be initialised and updated. 
The set $C$ is initially the bag $B(r)$ of the root of $T$ plus one inner node $v(r,r,r)$, and $S$ contains the tuple $\tpl v(r)$. 

The relations $D_\tau$ are computed in $d_{\text{tree}} \log n$ inductive steps, each of which can be defined in first-order logic, and therefore this computation can be carried out in $\AC^1$, thanks to $\Ind{\log n} = \AC^1$.  More precisely, the computation of the relations $D_\tau$ proceeds inductively in a bottom-up fashion. It starts with triangles  $\delta = (i_0, i_1, i_2)$ for which $T(\delta)$ has exactly one or two inner tree nodes (i,.e., nodes different from  $i_0, i_1, i_2$). Since such graphs $G(\delta)$ have at most $5\ell$ nodes, their type can be determined by a first-order formula.\footnote{Basically, all isomorphism types of such graphs and their respective \MSO types can be directly encoded into first-order formulas.} 

For larger triangles, several cases need to be distinguished. Here we explain the case of a triangle $\delta = (i_0, i_1, i_2)$, for which $i_0$ has child nodes $i'_1$ and $i'_2$ such that $i'_1\preceq i_1$ and $i'_2\preceq i_2$ (cf., Figure~\ref{figure:triangle-ind}). 
In this case, the type $\tau$ of $(G(\delta),\tpl v(\delta))$ can be determined from the types $\tau_1$ of $(G(\delta_1),\tpl v(\delta_1))$ and $\tau_2$ of $(G(\delta_2),\tpl v(\delta_2))$, where $\delta_1=(i'_1, i_1, i_1)$ and $\delta_2=(i'_2, i_2, i_2)$, and the type $\tau_0$ of the graph $G_0$ that includes all edges of $G(\delta)$ that are not already in $G(\delta_1)$ or $G(\delta_2)$. 
More precisely, $\tau_0$ is the type of $(G_0, \tpl v(i_0),\tpl v(i_1),\tpl v(i_2),\tpl v(i'_1),\tpl v(i'_2))$ and $G_0$ is the subgraph of $G$ with node set $V_0 = \bigcup \{B(i_0), B(i_1), B(i_2), B(i'_1), B(i'_2) \}$ and all edges from $G[V_0]$ that have at least one endpoint in $B(i'_1) \cup B(i'_2)$. These types are either already computed or the graphs are of size at most $5\ell$ and their type can therefore be determined by a first-order formula as before. 

We make this more precise. We observe that $(G(\delta),\tpl v(\delta))$ can be composed from the graphs $(G_0, \tpl v(i_0),\tpl v(i_1),\tpl v(i_2),\tpl v(i'_1),\tpl v(i'_2))$, $(G(\delta_1),\tpl v(\delta_1))$ and $(G(\delta_2),\tpl v(\delta_2))$ by first taking the disjoint union of these graphs and afterwards fusing nodes according to the identities induced by the additional constants.
For both operations, the depth-$d$ \MSO type of the resulting structure only depends on the depth-$d$ \MSO type(s) of the original structure(s) \cite[Theorem 1.5 (ii) and Proposition 3.6]{Makowsky04}.
The type $\tau$ of $(G(\delta),\tpl v(\delta))$ is therefore determined by a finite function $f$ as $\tau=f(\tau_0,\tau_1,\tau_2)$, which can be directly encoded into first-order formulas.

Finally, we describe how a dynamic program can maintain $S, C$, and $N$ for $\log n$ many changes. The relation $D_\tau$ is not adapted during the changes. Whenever an edge $(u,v)$ is inserted to or deleted from $G$, the nodes $u$ and $v$ are viewed as \emph{affected}. For every \emph{affected} node $u$ that is not yet in a bag stored in $S$, a \emph{special} tree node is selected  (in some canonical way, e.g.~always the smallest node with respect to $\leq$ is selected) such that $u \in B(j)$. Furthermore, if node $i$ is the least common ancestor of $j$ and another special node it becomes special, as well. It is easy to see that when selecting $j$ as a special node, at most one further 
node becomes special. The tuples $\tpl v(j)$ and $\tpl v(i)$ (if $i$ exists) are added to $S$, their elements are added to $C$, the identifier nodes in $C$ for maximal clean triangles are corrected, and $N$ is updated accordingly.
\end{proofof}

 \section{MSO Optimisation Problems} \label{section:mso-opt}
\newcommand{\enc}[1]{\ensuremath{[#1]}}
\newcommand{\OPT}{\problem{OPT}}
\newcommand{\Sub}{\ensuremath{\text{Sub}}}

With the techniques presented in the previous section, also $\MSO$ definable optimisation problems can be maintained in \DynFO for graphs with bounded treewidth. An \MSO definable optimisation problem $\OPT_\varphi$ is induced by an MSO formula $\varphi(X_1, \ldots, X_m)$ with free set variables $X_1, \ldots, X_m$. Given a graph $G$ with vertex set $V$, it asks for sets $A_1, \ldots, A_m \subseteq V$ of minimal\footnote{The adaptation to maximisation problems is straightforward.} size $\sum_{i=1}^m |A_i|$ such that $G\models \varphi(A_1, \ldots, A_m)$.  Examples (with $m=1$) for such problems are the vertex cover problem and the dominating set problem.

We require from a dynamic program for such a problem that it maintains unary query relations $Q_1,\ldots,Q_m$  that store, at any time, an optimal solution for the current graph.

\begin{theorem} \label{theorem:mso-opt}
For every \MSO formula $\varphi(X_1, \ldots, X_m)$ and every $k$, $\problem{OPT}_{\varphi}$ is in \DynFO for graphs with treewidth at most $k$.
\end{theorem}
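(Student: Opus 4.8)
The plan is to follow the strategy of Theorem~\ref{theorem:mso} and reduce, exactly as in the decision case, to showing that $\OPT_{\varphi}$ is $(\AC^1,\log n)$-maintainable for graphs of treewidth at most $k$; membership in \DynFO then follows from Theorem~\ref{theorem:fewerChanges} with the mild adaptations needed for a query that returns solution sets rather than a Boolean (the behaviour of an optimal solution on isolated vertices is forced locally, so an appropriate constant $c$ of almost domain independence can be fixed). As in the proof of Theorem~\ref{theorem:mso}, I would fix $\varphi(X_1,\dots,X_m)$ of quantifier depth $d$ and work over the signature obtained from $\{E\}$ by adding unary predicates $X_1,\dots,X_m$ that encode the candidate sets, so that a solution is a colouring of the vertices and its cost is $\sum_i|X_i|$.

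First I would enrich the bottom-up triangle computation of Theorem~\ref{theorem:mso} so that the $\AC^1$ initialisation computes, for every triangle $\delta$ with at least one inner node and every depth-$d'$ \MSO type $\tau$ over the extended signature, two items: the minimal cost $\mathrm{opt}(\delta,\tau)$ of an assignment of the inner nodes of $G(\delta)$ to the sets $X_i$ that realises type $\tau$ (set to $\infty$ if none exists), together with a canonical, say lexicographically least, witnessing assignment $W(\delta,\tau)$ on the inner nodes. The inductive step is the one of Theorem~\ref{theorem:mso}: when $(G(\delta),\tpl v(\delta))$ is composed from the glue graph $G_0$ and the sub-triangles $\delta_1,\delta_2$ with $\tau=f(\tau_0,\tau_1,\tau_2)$, the cost is additive, so $\mathrm{opt}(\delta,\tau)=\min\{\mathrm{opt}(\delta_1,\tau_1)+\mathrm{opt}(\delta_2,\tau_2)+\mathrm{cost}_0(\tau_0)\}$ over the finitely many triples with $f(\tau_0,\tau_1,\tau_2)=\tau$, and $W(\delta,\tau)$ is the union of the chosen sub-witnesses with the $\bigO(1)$-size glue assignment. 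Each step is first-order definable using the built-in $+$ and $\le$ (the minimum ranges over constantly many sums of three numbers), so the whole computation runs in $\Ind{\log n}=\AC^1$, and the canonical choices ensure the stored sub-witnesses glue into one consistent assignment.

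During the $\log n$ changes I would maintain exactly the data of Theorem~\ref{theorem:mso} — the special bags $S$, the centre $C$ of connection width $w=3\ell+1$, the bijection $N$, and the per-petal type and cost information carried by $D_\tau$ and $\mathrm{opt}$. For the query I would establish an optimisation variant of the composition theorem (Theorem~\ref{theorem:center}): finding a minimum-cost solution for the current graph $G'$ is equivalent to finding, over the indicator structure $\calB$ with universe $C$, a minimum-cost valid \emph{configuration}, that is, a colouring $\gamma$ of $C$ by subsets of $\{1,\dots,m\}$ together with a choice of an achievable type $\tau_\delta$ for each petal $\delta$, satisfying the translated sentence $\psi$ and minimising $\mathrm{cost}_C(\gamma)+\sum_\delta \mathrm{opt}(\delta,\tau_\delta)$. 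Since $|C|=\bigO(\log n)$, both $\gamma$ and the type assignment (a colouring of the $\bigO(\log n)$ petal identifiers by constantly many types) can be quantified over in first-order logic by Proposition~\ref{prop:log}, and $\psi$-satisfaction is checked as in Theorem~\ref{theorem:mso}. From an optimal configuration the answer is read off: $Q_i$ on $C$ is $\gamma^{-1}(i)$, and $Q_i$ on each petal $\delta$ is the precomputed witness $W(\delta,\tau_\delta)$.

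The hard part will be the cost comparison in this last step: selecting a \emph{minimum}-cost configuration requires adding and comparing the $\bigO(\log n)$ precomputed numbers $\mathrm{opt}(\delta,\tau_\delta)$ together with $\mathrm{cost}_C(\gamma)$, and iterated addition of logarithmically many polynomially-bounded integers is not available in plain first-order logic with built-in arithmetic. I expect to resolve this either by letting the dynamic program maintain the aggregate optimal cost \emph{incrementally} along the tree of special nodes over the $\log n$ change steps, so that the running sums are never recomputed from scratch in a single first-order step, or by precomputing suitable summation tables already in the $\AC^1$ initialisation. A secondary point to verify is that the canonical witnesses of neighbouring petals together with the colouring of $C$ assemble into one globally consistent assignment that actually attains the optimum, and not merely some valid solution; fixing all choices by the linear order should make this automatic, but it requires checking.
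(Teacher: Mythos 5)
Your proposal follows essentially the same route as the paper: reduce to $(\AC^1,\log n)$-maintainability, enrich the bottom-up triangle computation with the minimal cost and a lexicographically minimal witness for each realisable type over the signature extended by $X_1,\ldots,X_m$, reuse the relations $S$, $C$, $N$ and the type relations from Theorem~\ref{theorem:mso} unchanged during the $\log n$ change steps, and at query time quantify (via Proposition~\ref{prop:log}) a colouring of $C$ together with an assignment of achievable types to the petals, check feasibility through the composition theorem, minimise the total cost, and read the answer off the stored witnesses. This is exactly the paper's construction (with the relations there called $\#D_{\tau'}$, $F_{\tau'}$ and $\#R_{\tau'}$, $Q_{\tau'}$).

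The one place where you diverge is the step you flag as ``the hard part,'' and there your premise is wrong: iterated addition of logarithmically many polynomially bounded integers \emph{is} expressible in \FOar (this is \cite[Theorem 1.21]{Vollmer99}; the column sums only require counting up to $\bigO(\log n)$, which is polylog counting and hence in $\AC^0$). The paper simply invokes this fact, so no further machinery is needed. Your two proposed workarounds would in fact be problematic: the aggregate cost $\sum_\delta \mathrm{opt}(\delta,\tau_\delta)$ depends on the type assignment $(\tau_\delta)_\delta$, which is only chosen existentially at query time, so there is no single running sum to maintain incrementally across the change steps, and precomputing summation tables for all configurations at initialisation would have to be indexed by the configuration itself. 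So drop that detour and cite the iterated-addition fact. Two smaller points: your consistency worry about witnesses gluing correctly is handled as in the paper by noting that the type $\tau'$ already determines membership of the interface nodes in the $X_j$, so one only needs to check that the types assigned to petals sharing an interface node agree with each other and with the colouring of $C$; and the almost-domain-independence of the optimisation query (which returns relations rather than a Boolean) is indeed glossed over by the paper as well, so your remark there is fair.
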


As already mentioned in the previous section, for every $k$ and every $\GSO$ formula $\varphi$ there is an \MSO formula $\psi$ that is equivalent on graphs with treewidth $k$ \cite[Theorem 2.2]{Courcelle94}. Moreover, if $\varphi = \exists X_1 \cdots \exists X_m \, \varphi'$, then $\psi$ is of the form $\exists X_1^1 \cdots \exists X_1^\ell \cdots \exists X_m^1 \cdots \exists X_m^\ell \psi'$, for some natural number $\ell$.
So, we can conclude the following corollary.

\begin{corollary} \label{corr:gso-opt}
For every \GSO formula $\varphi(X_1, \ldots, X_m)$ and every $k$, $\problem{OPT}_{\varphi}$ is in \DynFO for graphs with treewidth at most $k$.
\end{corollary}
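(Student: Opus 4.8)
The plan is to reduce $\OPT_\varphi$ for \GSO to $\OPT_\psi$ for \MSO on the same class of graphs and then invoke Theorem~\ref{theorem:mso-opt} as a black box. By \cite{Courcelle94}, for every $k$ and every \GSO formula $\varphi(X_1,\ldots,X_m)$ there is an \MSO formula $\psi(X_1^1,\ldots,X_1^\ell,\ldots,X_m^1,\ldots,X_m^\ell)$ that is equivalent on graphs of treewidth at most $k$, where each guarded relation variable $X_i$ is represented by a block of $\ell$ unary set variables $X_i^1,\ldots,X_i^\ell$, exactly as recalled in the paragraph preceding the corollary. First I would fix, once and for all, the encoding underlying this translation: relative to a nice tree decomposition, a guarded tuple $\tpl a$ of a relation $A_i$ is assigned to the unique highest bag containing all its entries, and $A_i^j$ collects those tree nodes whose tuple at the $j$-th position pattern of the bag belongs to $A_i$. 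Since every tuple of $A_i$ then corresponds to exactly one pair (bag, position pattern), the decoding of $(A_i^1,\ldots,A_i^\ell)$ back to $A_i$ is first-order definable from the tree decomposition and, crucially, $|A_i|=\sum_{j=1}^\ell |A_i^j|$.

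This counting identity is the heart of the reduction: summing over all $i$ yields $\sum_{i=1}^m |A_i| = \sum_{i=1}^m\sum_{j=1}^\ell |A_i^j|$, so the \GSO objective and the (unweighted) \MSO objective minimised by Theorem~\ref{theorem:mso-opt} coincide under the encoding. To turn the correspondence into a genuine bijection between feasible solutions I would add to $\psi$ first-order definable \emph{well-formedness} constraints forcing the tuples $(A_i^1,\ldots,A_i^\ell)$ to be canonical encodings (each marked node carries a realisable position pattern, and each guarded tuple is marked only at its highest bag). Under these constraints, feasible assignments for $\OPT_\psi$ stand in size-preserving bijection with tuples of guarded relations satisfying $\varphi$, so the two optima agree and an optimal \MSO solution always decodes to an optimal \GSO solution.

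I would then apply Theorem~\ref{theorem:mso-opt} to $\psi$: it yields a \DynFO program maintaining unary query relations $Q_i^j$ that hold an optimal assignment for $\OPT_\psi$ after every change sequence keeping the graph within treewidth $k$. The desired query relations $Q_1,\ldots,Q_m$ for $\OPT_\varphi$ are obtained by applying the first-order decoding formula to the $Q_i^j$ together with the tree decomposition maintained by that program. Since these relations are first-order definable from relations already present in the auxiliary database, they can themselves be maintained by additional first-order update rules, keeping the whole construction inside \DynFO.

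The main obstacle I expect lies not in the counting identity but in the interaction of the encoding with the dynamic window. The tree decomposition is recomputed only periodically and is a decomposition of the base graph, while the current graph carries up to $\bigO(\log n)$ additional edges; hence tuples of an optimal guarded relation that are guarded only by a freshly inserted edge need not be guarded by any bag of the stored decomposition. Handling these requires routing such tuples through the center $C$ and its petals exactly as in the proof of Theorem~\ref{theorem:mso-opt} (via the indicator structure and Theorem~\ref{theorem:center}), and verifying that the size-preserving decoding still goes through for the constantly many additional position patterns this introduces. Establishing this compatibility carefully is where the real work of the proof lies.
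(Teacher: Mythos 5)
Your proposal follows essentially the same route as the paper: translate the \GSO formula into an equivalent \MSO formula via \cite[Theorem 2.2]{Courcelle94}, observe that each guarded variable $X_i$ becomes a block of unary variables $X_i^1,\ldots,X_i^\ell$, and invoke Theorem~\ref{theorem:mso-opt} as a black box. The paper's entire justification is the two-sentence remark preceding the corollary, so the size-preservation of the encoding (your identity $|A_i|=\sum_{j}|A_i^j|$) and the decoding of an optimal \MSO solution back to guarded relations are left implicit there; your elaboration of these points, including the worry about tuples guarded only by freshly inserted edges, is additional detail rather than a different argument.
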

Given the machinery from the previous section, the plan for a dynamic program for an MSO-definable optimisation problem is relatively straightforward. Again, it suffices to show $(\AC^1,\log n)$-maintainability. The affected nodes of the graph after $\log n$ changes are again collected in a center $C$ of the graph (with $\bigO(\log n)$ additional nodes as before). For each petal $D_i$ and each relevant MSO-type $\tau$ we basically maintain a collection $(B_1,\ldots,B_m)$ of subsets of $D_i-C$ that yields type $\tau$ in $D_i$ and is  minimal with respect to~$\sum_{i=1}^m |B_i|$. Then, it is easy to compute in a first-order fashion, for every possible colouring of~$C$, the minimum achievable overall sum for extensions of the colouring that make $\varphi$ true.

\begin{proofof}{Theorem \ref{theorem:mso-opt}}
We only prove the special case of $m=1$, the extension to the general case is straightforward.
Let $\varphi(X)$ be an \MSO formula of quantifier depth $d$. 
The proof of Theorem~\ref{theorem:mso} shows how one can obtain a dynamic program that $(\AC^1, \log n)$-maintains the model checking problem $\problem{MC}_\psi$ for $\psi \df \exists X\, \varphi$. We adapt this proof, and reuse its notation, in order to obtain such a dynamic program for $\problem{OPT}_{\varphi}$, using almost the same auxiliary relations. 
Together with Theorem \ref{theorem:fewerChanges} and Proposition \ref{prop:msoadi}, the result follows. 

In the following, we sketch the proof idea. 
We consider $\varphi$ to be an \MSO sentence over the signature $\{E,X\}$.
Let $G^{+X} = (V,E,X)$ be an arbitrary expansion of a graph $G$ with a center $C$ of connection width $w$, for some constant $w$. By Theorem \ref{theorem:center} there is a number $d'$ and an \MSO sentence $\psi$ such that for every $\calB^{+X} \in \calS(G^{+X},C,w,d')$ it holds that $G^{+X} \models \varphi$ if and only if $\calB^{+X} \models \psi$.
So, the formula $\psi$ uses the type information on the petals provided by $\calB^{+X}$ as well as $\restrict{G^{+X}}{C}$ directly to check whether the relation $X$ represents a \emph{feasible} solution of the problem $\OPT_\varphi$.
In the proof of Theorem \ref{theorem:mso} we explained how to obtain a first-order formula $\chi$ from $\psi$ such that $(G, \aux, \text{Sub}) \models \chi \Leftrightarrow \calB \models \psi$, for the auxiliary database $\aux$ maintained by the dynamic program constructed in the proof of Theorem \ref{theorem:mso} and a relation $\text{Sub}$ encoding subsets of $C$. 

Let $\calB \in \calS(G,C,w,d'+1)$ be an \MSO indicator structure for $G$. Our goal is to maintain some relations that augment the type information provided by $\calB$ such that a formula $\chi'$ similar to $\chi$ can ``guess'' a relation $X$, check that it is a feasible solution, compute its size and verify that no feasible solution of smaller size exists.
Of course, a relation $X$ of unrestricted size cannot be quantified in first-order logic, even in the presence of $\Sub$, but we will see that the restriction of $X$ to $C$ and the type information on the petals can be quantified, which is sufficient for our purpose.

We now give the details of the construction.
The structure $\calB$ contains relations $R_\tau$ such that $u^i_1 \in R_\tau$ if and only the depth-$(d'+1)$ \MSO type of $(\calA_i, \tpl u^i)$ over signature $\Sigma = \{E, c_1, \ldots, c_{3\ell+1}\}$ is $\tau$, where the subgraph $\calA_i$ over universe $D_i$ and the tuple $\tpl u^i$ are defined as in Subsection~\ref{subsction:fv}.
We say that a depth-$d'$ type $\tau'$ over signature $\Sigma^{+X} \df \Sigma \cup \{X\}$ \emph{can be realised} in $(\calA_i, \tpl u^i)$ by a set $A_i \subseteq D_i$, if $\tau'$ is the depth-$d'$ \MSO type of $(\calA_i, \tpl u^i, A_i)$. If $(\calA_i, \tpl u^i)$ has depth-$(d'+1)$ \MSO type $\tau$, the existence of such a set is equivalent to the statement $\exists X\, \alpha_{\tau'} \in \tau$.
We note that $\tau'$ already determines whether $u_j^i \in A_i$ shall hold, for each constant $u_j^i$ from the tupel $\tpl u^i$.

The dynamic program maintains relations $\#R_{\tau'}$ and $Q_{\tau'}$, for each depth-$d'$ \MSO type over $\Sigma^{+X}$. 
The relations $\#R_{\tau'}$ give the minimal size of a set that realises the type $\tau'$.
So, if $\tau'$ can be realised in $(\calA_i,\tpl u^i)$ by some set $A$, and $s$ is the minimal size of such a set, then $\#R_{\tau'}$ shall contain the tuple $(u^i_1, v_s)$, where $v_s$ is the $(s+1)$-th element\footnote{We ignore the case that the size could be as large as $|V|$, which can be handled by some additional encoding.} with respect to~$\leq$.
Furthermore, for the lexicographically minimal set $A$ of this kind and size $s$, $Q_{\tau'}$ shall contain all tuples $(u^i_1,a)$, where $a \in A$.

We construct a first-order formula $\chi'$ from $\chi$ that is able to define an \emph{optimal} solution $X$ for $\OPT_\varphi$ from $(G, \aux, \Sub)$ expanded by the relations $\#R_{\tau'}$ and $Q_{\tau'}$.
First, this formula quantifies for each depth-$d'$ \MSO type $\tau'$ the set of identifiers $u^i_1$ such that $X$ realises $\tau'$ in $(\calA_i, \tpl u^i)$ and checks consistency: as for each node $v \in C$ that appears in $\tpl u^i$ the type $\tau'$ already determines whether $v \in X$ shall hold, the respective types need to agree for nodes that appear in multiple petals.
For each $u^i_1$ the assigned type also needs to be realisable in the respective substructure $(\calA_i, \tpl u^i)$, which can be checked using the relations $R_\tau$ of $\calB$.
Using this information, $\chi'$ can apply $\chi$ to check that the implied set $X$ is a feasible solution.
With the help of $\#R_\tau$ it can compute the size of $X$, as \FOar is able to add up logarithmically many numbers \cite[Theorem 1.21]{Vollmer99} and $C$ is only of logarithmic size in $|V|$.
Also $\chi'$ checks that no other assignment of types $\tau'$ to identifier nodes results in feasible solutions of smaller size.
Finally, $\chi'$ uses the relations $Q_{\tau'}$ to actually return an optimal solution $X$.

Building on the proof of Theorem \ref{theorem:mso}, it remains to show that the additional auxiliary relations $\#R_{\tau'}$ and $Q_{\tau'}$ can be initialised and maintained. Actually, we maintain similarly defined relations $\#D_{\tau'}$ and $F_{\tau'}$, the relations $\#R_{\tau'}$ and $Q_{\tau'}$ are then first-order definable by the dynamic program using these relations.

Let $\delta$ be a triangle such that $G(\delta)$ has at least one inner node. 
Similar to the relations $D_\tau$ used in the proof of Theorem \ref{theorem:mso}, here a relation $\#D_{\tau'}$ contains the tuple $(\tpl v(\delta), u)$ if and only if (1) the depth-$d'$ \MSO type $\tau'$ is realisable in $(G(\delta),\tpl v(\delta))$, and (2)  $u$ is the $(s+1)$-th element with respect to $\leq$, where $s$ is the minimal size of a set that realises this type.
Furthermore, for the lexicographically minimal set $A$ of this kind and size $s$, $F_{\tau'}$ contains all tuples $(\tpl v(\delta),a)$, where $a \in A$.
It is clear that these relations suffice to define the relations $\#R_{\tau'}$ and $Q_{\tau'}$, given the other relations of the proof of Theorem \ref{theorem:mso}.

The proof of Theorem \ref{theorem:mso} can be extended to show that the initial versions of these auxiliary relations can be computed in $\AC^1$. For the inductive step of this computation, a type $\tau'$ realisable in a structure $(G(\delta), \tpl v(\delta))$ might be achievable by a finite number of combinations of types of its substructures. Here, the overall size of the realising set for $X$ needs to be computed and the minimal solution needs to be picked. This is possible by a $\FOar$-formula since the number of possible combinations is bounded by a constant depending only on $d$ and $k$.

The updates of the auxiliary relations are exactly as in the proof of Theorem~\ref{theorem:mso}. Since $D_\tau$ needs no updates there, neither $\#D_{\tau'}$  nor $F_{\tau'}$  do, here.
\end{proofof}

From the proof it is easy to see that a dynamic program can also maintain the \emph{size} $s$ of an optimal solution, either implicitly as $\sum_{j=1}^m|Q_j|$ for distinguished relations $Q_j$, or as $\{v_s\}$. 
Additionally, it can easily be adapted for optimisation problems on \emph{weighted} graphs, where nodes and edges have polynomial weights in $n$.

\section{Conclusion}\label{section:conclusion}
In this paper, we introduced a strategy for maintaining queries by periodically restarting its computation from scratch and limiting the number of change steps that have to be taken into account. This has been captured in the notion of $(\calC, f)$-maintainable queries, and we proved in particular that all $(\AC^1, \log n)$-maintainable, almost domain independent queries are actually in \DynFO. As a consequence, decision and optimisation queries definable in  \MSO- and \GSO-logic are in $\DynFO$ for graphs of bounded treewidth.  For this, we stated a Feferman-Vaught-type composition theorem for these logics, which might be interesting in its own right. Though we phrase our results for \MSO and \GSO for graphs only, their proofs translate swiftly to general relational structures. 

Apart from this paper, this strategy is already used in \cite{DattaMVZ18} to prove that the reachability query can be maintained dynamically under insertions and deletions of a non-constant number of edges per change step.

We believe that our strategy will find further applications. For instance, it is conceivable that interesting queries on planar graphs, such as the shortest-path query, can be maintained for a bounded number of changes using auxiliary data computed by an $\AC^1$ algorithm (in particular since many important data structures for planar graphs can be constructed in logarithmic space and therefore in $\AC^1$).

\section*{Acknowledgment}
  \noindent We thank an anonymous referee for valuable suggestions that greatly simplified the proof of Theorem \ref{theorem:center}.
   
 \bibliographystyle{alpha}  
 \bibliography{bibliography}

\end{document}